\def\calL{\mathcal{L}}
\def\calA{\mathcal{A}}
\def\calL{\mathcal{L}}
\def\calH{\mathcal{H}}
\def\opta{A_{\text{opt}}}
\def\optp{P_{\text{opt}}}
\def\findmin{FindMinCost}
\def\reset{ResetCost}
\newtheorem{observation}{Observation}
\newtheorem{lemma}{Lemma}
\newtheorem{theorem}{Theorem}
\newtheorem{corollary}{Corollary}
\newtheorem{invariant}{Algorithm Invariant}
\title{Minimum-Weight Half-Plane Hitting Set\thanks{A preliminary version of this paper will appear in {\em Proceedings of the 37th Canadian Conference on Computational Geometry (CCCG 2025)}. This research was supported in part by NSF under Grant CCF-2300356.}
}
 \author{
 Gang Liu\thanks{Kahlert School of Computing,
 University of Utah, Salt Lake City, UT 84112, USA. {\tt gangliu@cs.utah.edu}}
 \and
 Haitao Wang\thanks{Kahlert School of Computing,
 University of Utah, Salt Lake City, UT 84112, USA. {\tt haitao.wang@utah.edu}}
 }
\begin{document}
\pagestyle{plain}
\date{}

\thispagestyle{empty}
\maketitle

\vspace{-0.3in}

\begin{abstract}
Given a set $P$ of $n$ weighted points and a set $H$ of $n$ half-planes in the plane, the hitting set problem is to compute a subset $P'$ of points from $P$ such that each half-plane contains at least one point from $P'$ and the total weight of the points in $P'$ is minimized. The previous best algorithm solves the problem in $O(n^{7/2}\log^2 n)$ time. In this paper, we present a new algorithm with runtime $O(n^{5/2}\log^2 n)$.
\end{abstract}

{\em Keywords:} Geometric hitting set, half-planes, circular coverage, interval coverage, geometric coverage

\section{Introduction}
\label{sec:intro}
Let $H$ be a set of $n$ half-planes and $P$ a set of $n$ points in the plane such that each point has a weight. We say that a point of $P$ {\em hits} a half-plane of $H$ if the half-plane covers the point. A subset $P'$ of $P$ is a {\em hitting set} for $H$ if every half-plane of $H$ is hit by a point in $P'$; $P'$ is a {\em minimum-weight} hitting set if the total weight of all points of $P'$ is the smallest among all hitting sets of $H$. 

In this paper, we consider the {\em half-plane hitting set problem} which is to compute a minimum-weight hitting set for $H$.
The problem has been studied before~\cite{ref:ChanEx14, ref:Har-PeledWe12, ref:LiuGe23, ref:LiuOn24, ref:LiuUn24}. Har-Peled and Lee~\cite{ref:Har-PeledWe12} first proposed an $O(n^6)$ time algorithm. Liu and Wang~\cite{ref:LiuGe23} gave an improved solution by reducing the problem to $O(n^2)$ instances of the {\em lower-only} half-plane hitting set problem, where all half-planes are lower ones. Consequently, if the lower-only problem can be solved in $O(T)$ time, the general problem is solvable in $O(n^2 \cdot T)$ time. Liu and Wang~\cite{ref:LiuGe23} derived an $O(n^2 \log n)$ time algorithm for the lower-only problem, thus solving the general half-plane hitting set problem in $O(n^4 \log n)$ time. More recently, Liu and Wang~\cite{ref:LiuOn24} proposed an improved $O(n^{3/2} \log^2 n)$ time algorithm for the lower-only problem, leading to an $O(n^{7/2} \log^2 n)$ time solution for the general problem.

We present a new algorithm of $O(n^{5/2} \log^2 n)$ time. This improves the previous best solution~\cite{ref:LiuOn24} by a linear factor. More specifically, our algorithm runs in $O(\kappa\cdot n^{{3}/{2}} \log^2 n)$, where $\kappa$ is the minimum number of points of $P$ covered by any half-plane of $H$. As such, our algorithm could be more efficient if $\kappa$ is relatively small in certain applications. For example, 
if there exists a half-plane in $H$ that covers only $O(1)$ points of $P$, then the time complexity of our algorithm is bounded by $O(n^{{3}/{2}} \log^2 n)$, matching the performance of the previous best algorithm for the lower-only case~\cite{ref:LiuOn24}.

\paragraph{\bf Related work.}
In the {\em unweighted} half-plane hitting set problem, all points have the same weight. As it is a special case of the weighted problem, all above results are applicable to the unweighted case. Recently, Liu and Wang proposed an $O(n\log n)$-time algorithm for the unweighted case~\cite{ref:LiuAn25}. Additionally, Wang and Xue~\cite{ref:WangAl24} proved a lower bound of $\Omega(n \log n)$ time under the algebraic decision tree model for the unweighted case, even if all half-planes are lower ones. Therefore, Liu and Wang's algorithm is optimal.

A closely related problem is the half-plane {\em coverage} problem. Given $P$ and $H$ as above with each half-plane associated with a weight, the problem is to compute a minimum-weight subset of half-planes whose union covers all points of $P$. Logan and Wang~\cite{ref:PedersenAl22} reduced this problem to $O(n^2)$ instances of the {\em lower-only} half-plane coverage problem, where all half-planes are lower ones. Consequently, if the lower-only problem can be solved in $O(T')$ time, then the general problem is solvable in $O(n^2 \cdot T')$ time. Notice that the lower-only coverage problem is dual to the lower-only hitting set problem (i.e., the two problems can be reduced to each other in linear time). Therefore, an algorithm for the lower-only hitting set problem can be used to solve the lower-only coverage problem with the same time complexity, and vice versa. In fact, the $O(n^{3/2} \log^2 n)$ time algorithm by Liu and Wang~\cite{ref:LiuOn24} for the lower-only hitting set problem was originally described to solve the lower-only coverage problem. 

It should be noted that in the general case where both upper and lower half-planes are present in $H$, the hitting set problem and the coverage problem are no longer dual to each other. While the unweighted general half-plane hitting set problem can be solved in $O(n\log n)$ time~\cite{ref:LiuAn25}, the currently best algorithm for the unweighted general half-plane coverage problem, which is due to Wang and Xue~\cite{ref:WangAl24}, runs in $O(n^{4/3} \log^{5/3} n \log^{O(1)} \log n)$ time.


The hitting set problem, as a fundamental problem, has been extensively explored in the literature. Various geometric versions of this problem have garnered significant attention, and many are known to be NP-hard~\cite{ref:OualiA14,ref:ChanEx14,ref:MustafaIm10, ref:BusPr18, ref:EvenHi05, ref:GanjugunteGe11, ref:LiA15}. For instance, in the disk hitting set problem, given a set of disks and a set of points in the plane, the goal is to compute a smallest subset of points that hit all the disks. This problem remains NP-hard even when all disks have the same radius~\cite{ref:DurocherDu15, ref:KarpRe72, ref:MustafaIm10}. 
However, certain variants of the problem can be solved in polynomial time. Liu and Wang \cite{ref:LiuGe23} studied a line-constrained version, where the centers of all disks lie on a line, and they developed a polynomial-time algorithm for this case. For related problems in similar geometric settings, see also \cite{ref:LiuOn24,ref:LiuUn24,ref:PedersenOn18,ref:PedersenAl22}. 

\paragraph{\bf Our approach.}
To solve the half-plane hitting set problem on $ P $ and $ H $, instead of reducing the problem to $ O(n^2) $ instances of the lower-only case as in the previous work~\cite{ref:LiuGe23}, we follow a problem reduction method in \cite{ref:LiuAn25} for the unweighted case and reduce the problem to the following \emph{circular-point coverage} problem: Given a set $ \calA $ of weighted arcs and a set $ B $ of points on a circle $ C $, compute a minimum-weight subset of arcs whose union covers all points. 
In our reduction, the points of $B$ are defined by the half-planes of $H$ and the arcs of $\calA$ are defined by the points of $P$. 
While each half-plane of $ H $ defines exactly one  point in $ B $ and thus $ |B| = n $, each point of $ P $ may generate as many as $ n/2 $ (disjoint) arcs in $ \calA $ and thus $ |\calA| = O(n^2)$. Our reduction ensures that a point $ p \in P $ hits a half-plane $ h \in H $ if and only if the point in $ B $ defined by $ h $ is covered by one of the arcs generated by $ p $. An optimal solution to the hitting set problem on $ P $ and $ H $ can be easily obtained from an optimal solution to the circular-point coverage problem: Suppose that $ \opta $ is a minimum-weight subset of arcs in $ \calA $ whose union covers $ B $; for each arc in $ \opta $, if the arc is generated by a point $ p \in P $, then we add $ p $ to $ \optp $. We prove that the subset $ \optp $ thus obtained is a minimum-weight hitting set for $ H $ (note that this implies that no two arcs from $\opta$ are generated by the same point).

To solve the circular-point coverage problem, while the unweighted case of this problem can be solved in $O((|\calA| + |B|) \log (|\calA| + |B|))$ time~\cite{ref:WangAl24}, 
the currently best algorithm for the weighted case, which is due to Atallah, Chen, and Lee \cite{ref:AtallahAn95}, runs in $O(\kappa\cdot (|\calA| + |B|) + (|\calA| + |B|) \log (|\calA| + |B|))$ time, where $\kappa$ is the minimum number of arcs covering any point of $B$. If applied to our problem, $\kappa$ is equal to the minimum number of points of $P$ covered by any half-plane of $H$, and thus $\kappa\leq n$ and applying the algorithm of \cite{ref:AtallahAn95} can solve our problem in $O(\kappa\cdot n^2+n^2\log n)$ time, which is  $O(n^3)$ in the worst case. 

We take a different route to tackle the problem. We further reduce the problem to $\kappa$ instances of an interval coverage problem. In each instance, we have $O(n^2)$ weighted intervals and $O(n)$ points on a line, and the goal is to compute a minimum-weight subset of intervals whose union covers all points. Clearly, solving each problem instance requires $\Omega(n^2)$ time as the number of intervals could be $\Omega(n^2)$. We instead solve each problem instance in an ``indirect'' way and the ``indirect'' solutions of all $\kappa$ problem instances together also lead to an optimal solution to our original circular-point coverage problem. The advantage of computing an indirect solution is that we do not have to explicitly have the $O(n^2)$ intervals and instead only need to use the original input $P$ and $H$. Our algorithm can compute an indirect solution for each problem instance in $O(n^{3/2}\log^2 n)$ time. Consequently, we can solve our circular-point coverage problem and thus the hitting set problem in a total of $O(\kappa\cdot n^{3/2}\log^2 n)$ time.

\paragraph{\bf Outline.} The rest of the paper is organized as follows. After defining the notation in Section~\ref{sec:pre}, we reduce the problem to the circular-point coverage problem in Sections~\ref{sec:trans} and then solve it in Section~\ref{sec:trans2}. As discussed above, we solve the circular-point coverage problem by reducing it to $\kappa$ instances of an interval coverage problem. Our algorithm for the interval coverage problem is also described in Section~\ref{sec:trans2}, but the implementation of the algorithm is discussed in Section~\ref{sec:implesketch}. 

\section{Preliminaries}
\label{sec:pre}
We follow the notation in Section~\ref{sec:intro}, e.g., $P$, $H$, $n$, etc. We assume that each half-plane must be hit by a point of $P$; otherwise, a hitting set does not exist. We can check whether this condition is satisfied in $O(n\log n)$ time (e.g., by first computing the convex hull of $P$, and then for each half-plane $h\in H$, by using the convex hull we can determine whether $h$ contains a point of $P$ in $O(\log n)$ time). For each point $p\in P$, let $w(p)$ represent the weight of $p$; we assume $w(p)>0$ since otherwise $p$ could always be included in the solution.

For each half-plane $h\in H$, its \emph{normal} is the vector perpendicular to its bounding line and toward the interior of $h$. If two half-planes $h, h' \in H$ have the same normal, then one of them must contain the other, say, $h \subseteq h'$. In this case, $h'$ is redundant because any point hitting $h$ also hits $h'$. We can efficiently identify redundant half-planes in $O(n \log n)$ time by sorting the half-planes by their normals. In the following, we assume that half-planes of $H$ have distinct normals. 


\section{Reducing to a circular-point coverage problem}
\label{sec:trans}

In this section, we reduce the half-plane hitting set problem for $P$ and $H$ to a circular-point coverage problem for a set $\calA$ of arcs and a set $B$ of points on a circle $C$. In what follows, we first define the circular-point coverage problem, i.e., $\calA$, $B$, and $C$. We then prove the correctness of the reduction, i.e., explain how and why a solution to the circular-point coverage problem can lead to a solution to the half-plane hitting set problem. 
Our reduction is similar to the approach in \cite{ref:LiuAn25} for the unweighted case. The difference is that here we have to consider the weights of the points. One may consider our approach an extension of that in \cite{ref:LiuAn25}. 

\subsection{Defining the circular-point coverage problem}

First, we define $C$ as a unit circle.

\paragraph{Defining $\boldsymbol{B}$.} For each half-plane $h \in H$, we define a point $b$ on $C$ as the intersection of $C$ with the ray originating from the center of $C$ and parallel to the normal of $h$. We call $h$ the {\em defining half-plane} of $b$. Let $B$ be the set of all such points on $C$ defined by the half-planes of $H$. Since no two half-planes in $H$ have the same normal, points in $B$ are distinct on $C$.


Let $b_1, b_2, \ldots, b_n$ be the points of $B$ ordered counterclockwise on $C$; we consider it a cyclic list. We use $B[i,j]$ to refer to the (contiguous) sublist of $B$ counterclockwise from $b_i$ to $b_j$, inclusive, i.e., if $i \leq j$, then $B[i,j] = {b_i, b_{i+1}, \ldots, b_j}$; otherwise,  $B[i,j] = {b_i, \ldots, b_n, b_1, \ldots, b_j}$. For any two points $b$ and $b'$ on $C$, let $C[b,b']$ denote the arc of $C$ counterclockwise from $b$ to $b'$, inclusive.


\paragraph{\bf Defining $\boldsymbol{\calA}$.} For each point $p \in P$, we define a set $A(p)$ of arcs on $C$. 
For each maximal sublist $B[i,j]$ of $B$ such that $p$ hits all defining half-planes of the points in $B[i,j]$, we add the arc $C[b_i,b_j]$ to $A(p)$. In the special case where $p$ hits all the half-planes of $H$, we let $A(p)$ consist of the single arc $C[b_1,b_n]$. 
Note that $A(p)$ may contain at most $\lfloor n/2 \rfloor$ arcs, and all these arcs are pairwise disjoint. We say that $p$ is the {\em defining point} of the arcs of $A(p)$. For each arc $\alpha \in A(p)$, we let $w(\alpha)=w(p)$ denote its weight. Define $\calA=\bigcup_{p\in P} A(p)$. Clearly, $|\calA| = O(n^2)$.

\subsection{Correctness of the reduction}

Consider the circular-point coverage problem for $B$ and $\calA$. Suppose $\opta$ is an optimal solution, i.e., $\opta$ is a minimum-weight subset of $\calA$ for covering $B$. We create a subset $\optp$ of $P$ as follows: For each arc in $\opta$, we add its defining point to $\optp$. 

In the following, we prove that $\optp$ is an optimal solution to the half-plane hitting set problem for $P$ and $H$. At first glance, one potential issue is that two arcs of $\opta$ might be defined by the same point of $P$ and thus a point might be added to $\optp$ multiple times. We will show that this is not possible, implying that $\sum_{p \in \optp} w(p) = \sum_{\alpha \in \opta} w(\alpha)$. These are proved in Corollary~\ref{corollary:circular-point}, which follows mostly from the following lemma.

\begin{lemma}
\label{lemm:circular-point}
If there exists a subset of arcs of $\calA$ with total weight $W$ that forms a coverage for $B$, then there exists a subset of $P$ with total weight at most $W$ that forms a hitting set for $H$. Symmetrically, if there exists a subset of $P$ with total weight $W$ that forms a hitting set for $H$, then there exists a subset of arcs of $\calA$ with total weight at most $W$ that forms a coverage for $B$. 
\end{lemma}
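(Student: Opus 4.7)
The plan is to handle the two directions of the lemma separately.

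The first direction (coverage implies hitting set) is the easier one. Given a coverage $\calA' \subseteq \calA$ of total weight $W$, I would define $P'$ to be the set of defining points of the arcs in $\calA'$. For each $h \in H$, the point $b \in B$ defined by $h$ is covered by some arc $\alpha \in \calA'$; by the construction of $A(p)$, the defining point $p$ of $\alpha$ then hits $h$, so $P'$ is a hitting set. Since every arc of $A(p)$ has weight $w(p) > 0$ and each $p \in P'$ is the defining point of at least one arc in $\calA'$, I get $\sum_{p \in P'} w(p) \leq \sum_{\alpha \in \calA'} w(\alpha) = W$.

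For the second direction (hitting set implies coverage), the plan is to construct, given a hitting set $P' \subseteq P$ of weight $W$, a coverage $\calA' \subseteq \calA$ of $B$ that uses at most one arc from $A(p)$ for each $p \in P'$. If this can be done, $w(\calA') \leq \sum_{p \in P'} w(p) = W$ as required. Concretely, I would try to select one arc $\alpha_p \in A(p)$ per $p \in P'$ such that $\bigcup_{p \in P'} \alpha_p \supseteq B$, and take $\calA' = \{\alpha_p : p \in P'\}$.

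The hard part will be proving that such a single-arc-per-point selection always exists. The naive difficulty is that each $\alpha \in A(p)$ only covers one maximal run of $B_p := \{b \in B : p \text{ hits } h_b\}$, and two points $b$ both needing $p$'s help could a priori lie in different runs of $B_p$. To overcome this, I plan to leverage the underlying geometry: for any two points $p_1, p_2 \in P$, the function $g(\theta) = (p_1 - p_2) \cdot (\cos\theta, \sin\theta)$ is a sinusoid and therefore has at most two sign changes on $[0, 2\pi)$, which sharply restricts how the sets $B_{p_1}$ and $B_{p_2}$ can interleave cyclically on $B$. My plan is to reduce to the case that $P'$ is a minimum-weight hitting set, and then use this sinusoidal structural property to argue that for each $p \in P'$ the set of points uniquely hit by $p$ within $P'$ forms a contiguous cyclic sublist of $B$, hence lies in a single run of $B_p$. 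Taking $\alpha_p$ to be the arc covering that run and verifying that the chosen arcs also sweep up every multiply-covered point of $B$ would complete the second direction and hence the lemma.
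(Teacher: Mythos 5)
Your first direction matches the paper exactly and is correct. Your overall goal for the second direction --- selecting one arc from $A(p)$ per $p$ in a hitting set so that the selected arcs cover $B$ --- is also the right shape of argument; that is exactly what the paper produces. However, the route you propose to get there has a genuine gap.

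The claim that ``for each $p\in P'$ the set of points uniquely hit by $p$ within $P'$ forms a contiguous cyclic sublist of $B$'' is false, even for a minimum-weight hitting set. Take $p_1=(0,1)$, $p_2=(0,-1)$, and three upper half-planes with normals at $30^\circ$, $90^\circ$, $150^\circ$ with bounding lines chosen so that both $p_1$ and $p_2$ lie in the $90^\circ$ one but only $p_1$ lies in the $30^\circ$ and $150^\circ$ ones; add one lower half-plane hit only by $p_2$ so that $\{p_1,p_2\}$ is a genuine minimum-weight hitting set. The set of $B$-points uniquely hit by $p_1$ is then $\{b_{30^\circ},b_{150^\circ}\}$ with $b_{90^\circ}$ sitting in between, i.e., not contiguous. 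The weaker statement that the uniquely-hit points of $p$ lie in a single maximal run of $B_p$ does happen to hold in this instance (since $p_1$ also hits the $90^\circ$ half-plane), but your argument for it goes through the false contiguity claim, and the pairwise sinusoid comparison alone does not control how three or more points of $P'$ interleave. Moreover, your plan explicitly leaves unaddressed how the chosen arcs pick up the multiply-covered points of $B$; nothing in the uniquely-hit-set bookkeeping forces those to be swallowed by some $\alpha_p$.

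The paper sidesteps both difficulties with a cleaner geometric decomposition. Rather than reasoning about which half-planes are uniquely hit, it takes the convex hull of $P'$ with vertices $p_1,\ldots,p_t$ (handling $|P'|=1,2$ as easy special cases), uses the outward normals of the hull edges to partition $C$ into $t$ contiguous arcs $\alpha'_1,\ldots,\alpha'_t$, and assigns $\alpha'_i$ to $p_i$. For any normal direction $\rho$ in the range associated with $\alpha'_i$, the point $p_i$ is the extreme point of $P'$ in direction $\rho$, so if any point of $P'$ hits a half-plane with normal $\rho$ then $p_i$ does too. Hence $p_i$ hits every defining half-plane of $B\cap\alpha'_i$, and since those $B$-points are contiguous they lie in a single run of $B_{p_i}$, giving a single arc $\alpha_i\in A(p_i)$ that covers them. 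This produces the required one-arc-per-point selection (one per hull vertex, which only helps the weight bound), needs no minimality assumption on $P'$, and has no leftover multiply-covered points to chase. I would recommend replacing the uniquely-hit/contiguity plan with this convex-hull partition.
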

\begin{proof}
Suppose $B$ can be covered by a subset $A' \subseteq \calA$ of arcs with total weight $W$. Then, let $P'$ be the set of defining points of all arcs of $A'$. Clearly, $\sum_{p \in P'} w(p) \leq \sum_{\alpha \in A'} w(\alpha) = W$. We claim that $P'$ is a hitting set for $H$. To see this, consider a half-plane $h\in H$. Let $b$ be the point of $B$ defined by $h$. Then, $b$ must be covered by an arc $\alpha$ of $A'$. By definition, $h$ is hit by the defining point of $\alpha$, which is in $P'$. Therefore, $P'$ is a hitting set for $H$. This proves one direction of the lemma. In the following, we prove the other direction.

Suppose $H$ has a hitting set $P'\subseteq P$ and the total weight of the points of $P'$ is $W$. Our goal is to show that there exists a subset $A'\subseteq \calA$ of arcs such that their union covers $B$ and $\sum_{\alpha \in A'} w(\alpha) \leq W$. Depending on the size $|P'|$, there are three cases.

\begin{enumerate}
    \item If $|P'|=1$, then let $p$ be the only point of $P'$. By definition, $p$ hits all half-planes of $H$ and $W = w(p)$. Hence, $p$ defines a single arc $\alpha=C[b_1,b_n]$ in $\calA$ and $w(\alpha)=w(p)$. Clearly, the arc $\alpha$ covers all points of $B$. As such, if we have $A' = \{\alpha\}$, then $A'$ satisfies our condition.
    
    \item If $|P'|=2$, then let $p_u$ and $p_l$ denote the two points of $P'$, respectively. Hence, $W = w(p_u) + w(p_l)$. We rotate the coordinate system so that the line segment $\overline{p_up_l}$ is vertical with $p_u$ higher than $p_l$. Since every half-plane of $H$ is hit by $p_u$ or $p_l$, it is not difficult to see that all upper half-planes (in the rotated coordinate system) must be hit by $p_u$ and all lower half-planes must be hit by $p_l$. Hence, $p_u$ must define an arc $\alpha_u$ in $\calA$ that covers all points of $B$ that are defined by all upper half-planes; similarly, $p_l$ must define an arc $\alpha_l$ in $\calA$ that covers all points of $B$ that are defined by all lower half-planes. Therefore, $\alpha_u$ and $\alpha_l$ together cover all points of $B$. If we let $A' = \{\alpha_l, \alpha_u\}$, we have $\sum_{\alpha \in A'} w(\alpha) = w(\alpha_l) + w(\alpha_u) = w(p_l) + w(p_u) = W$.

    \item If $|P'| \geq 3$, then let $p_1, p_2, \ldots, p_t$ be vertices of the convex hull of $P'$, ordered counterclockwise. Clearly, $t\geq 3$ and $\sum_{1\leq i\leq t}w(p_i)\leq W$.  For each pair of adjacent points $p_i$ and $p_{i+1}$ on the convex hull, with indices modulo $t$, define $h_{i,i+1}$ as the half-plane whose bounding line contains the line segment $\overline{p_i p_{i+1}}$ such that it does not contain the interior of the convex hull of $P'$ (see Fig.~\ref{fig:convexhullweighted}). We define a point $b'_{i,i+1}$ on the circle $C$ using the normal of $h_{i,i+1}$. The $t$ points $b'_{i,i+1}$, $1 \leq i \leq t$, partition $C$ into $t$ arcs $\alpha_i'$, where $\alpha_i' = C[b'_{i-1,i}, b'_{i,i+1}]$. In the following, we show that for each arc $\alpha_i'$, the point $p_i$ defines an arc in $\calA$ that covers all points of $B \cap \alpha_i'$. 
    
\begin{figure}[h]
\centering
\includegraphics[height=1.7in]{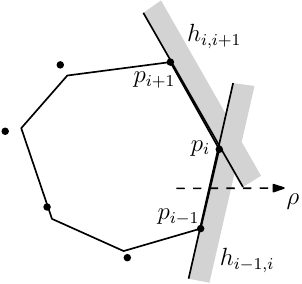}
\caption{Illustration of $h_{i-1,i}$, $h_{i,i+1}$, and $\rho$.}
\label{fig:convexhullweighted}
\end{figure}

    Consider a point $b \in \alpha_i'$, let $h\in H$ be its defining half-plane. Let $\rho$ denote the normal of $h$. By definition, $\rho$ is in the interval of directions counterclockwise from the normal of $h_{i-1,i}$ to that of $h_{i,i+1}$ (see Fig.~\ref{fig:convexhullweighted}). It is not difficult to see that the point $p_i$ is the most extreme point of $P'$ along the direction $\rho$. Since $h$ is hit by a point of $P'$, $h$ must be hit by $p_i$. This means that $b$ must be covered by an arc defined by $p_i$. As all points of $\alpha_i' \cap B$ are contiguous in the cyclic list of $B$, they must be covered by a single arc defined by $p_i$; let $\alpha_i$ denote the arc. Note that $w(\alpha_i)=w(p_i)$. 

    As the union of arcs $\alpha'_i$, $1\leq i\leq t$, covers $C$, the union of arcs $\alpha_i$, $1\leq i\leq t$, covers $B$. Let $A'$ be the set of the arcs $\alpha_i$, $1\leq i\leq t$. We have $\sum_{\alpha \in A'} w(\alpha) = \sum_{1 \leq i \leq t} w(\alpha_i) = \sum_{1 \leq i \leq t} w(p_i) \leq W$.   
\end{enumerate}

The lemma thus follows. 
\end{proof}


\begin{corollary}
\label{corollary:circular-point}
\begin{enumerate}
    \item The total weight of a minimum-weight subset of $\calA$ for covering $B$ is equal to the total weight of a minimum-weight subset of $P$ for hitting $H$.
    \item No two arcs of $\opta$ are defined by the same point.
    \item $\optp$ is an optimal solution to the half-plane hitting set problem.
\end{enumerate}
\end{corollary}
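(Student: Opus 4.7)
The plan is to derive all three parts as straightforward consequences of Lemma~\ref{lemm:circular-point}. First I would establish part~1 by applying the lemma in both directions. Writing $W_H$ for the total weight of a minimum-weight hitting set of $H$ and $W_\calA$ for the total weight of a minimum-weight subset of $\calA$ covering $B$, the first direction of the lemma applied to an optimal arc cover produces a hitting set of weight at most $W_\calA$, so $W_H \leq W_\calA$; the symmetric direction gives $W_\calA \leq W_H$. Hence $W_H = W_\calA$, which is part~1.

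Next, for parts~2 and~3, I would apply the construction from the first direction of the lemma's proof directly to the optimal cover $\opta$. That construction takes an arc cover $A'$ and returns, as a hitting set, the set $P'$ of defining points of the arcs in $A'$, which is exactly how $\optp$ was defined. The proof of the lemma shows $\optp$ is a hitting set for $H$ and $\sum_{p \in \optp} w(p) \leq \sum_{\alpha \in \opta} w(\alpha)$. Combining this with part~1 and the trivial lower bound $\sum_{p \in \optp} w(p) \geq W_H$ gives the chain
\[
W_H \;\leq\; \sum_{p \in \optp} w(p) \;\leq\; \sum_{\alpha \in \opta} w(\alpha) \;=\; W_\calA \;=\; W_H,
\]
so every inequality is an equality.

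Finally I would use the positivity of weights (assumed in Section~\ref{sec:pre}) to extract part~2 from this equality. If two arcs of $\opta$ shared a defining point $p$, then $w(p)$ would be double-counted on the right-hand side but counted only once on the left; since $w(p)>0$, the middle inequality would then be strict, contradicting the equality just established. Thus no two arcs of $\opta$ share a defining point, proving part~2. Combined with the equality $\sum_{p \in \optp} w(p) = W_H$ and the fact that $\optp$ is a hitting set, this yields part~3.

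I do not foresee any real obstacle here: the lemma does all the substantive geometric work, and the only ingredient beyond it is the positive-weight assumption, which is precisely what rules out duplicated defining points and turns the lemma's two-sided inequality into the desired exact correspondence.
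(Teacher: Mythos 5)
Your proposal is correct and takes essentially the same approach as the paper: both prove part~1 from the two directions of Lemma~\ref{lemm:circular-point}, both observe that $\optp$ is a hitting set via the lemma's construction, and both use the positivity of weights together with part~1 to rule out a duplicated defining point and conclude optimality. Your chain-of-inequalities presentation is just a mildly repackaged version of the paper's contradiction argument for part~2.
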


\begin{proof}
The first corollary statement directly follows from Lemma~\ref{lemm:circular-point}.

For the second corollary statement, notice that $\optp$ is a hitting set of $H$, which follows a similar argument to the first direction of Lemma~\ref{lemm:circular-point}. Assume to the contrary that $\opta$ has two arcs $\alpha$ and $\alpha'$ defined by the same point of $P$. Then, since the weight of every arc is positive, by the definition of $\optp$, $\sum_{p \in \optp} w(p) < \sum_{\alpha \in \opta} w(\alpha)$ holds. As $\opta$ is a minimum-weight subset of $\calA$ for covering $B$ and $\optp$ is a hitting set of $H$, we obtain that the total weight of a hitting set for $H$ is smaller than the total weight of a minimum-weight subset of $\calA$ for covering $B$, a contradiction to the first corollary statement.

For the third statement, due to the second statement, by the definition of $\optp$, it holds that 
$\sum_{p \in \optp} w(p) = \sum_{\alpha \in \opta} w(\alpha)$. As $\opta$ is a minimum-weight subset of $\calA$ for covering $B$, and $\optp$ is a hitting set for $H$, by the first corollary statement, $\optp$ must be a minimum-weight hitting set for $H$.
\end{proof}

By Corollary~\ref{corollary:circular-point}, we have successfully reduced our half-plane hitting set problem on $H$ and $P$ to the circular-point coverage problem on $B$ and $\calA$. 

\section{Solving the circular-point coverage problem}
\label{sec:trans2}

In this section, we present an algorithm to solve the circular-point coverage problem in Section~\ref{sec:trans}, i.e., computing a subset of arcs of $\calA$ such that their union covers all points of $B$ and their total weight is minimized (such a subset is called an {\em optimal solution} of the problem; we often use $\opta$ to denote an optimal solution). Recall that $|\calA|=O(n^2)$ and $|B|=n$. 

\subsection{Reducing to interval coverage}

Our strategy is to reduce the problem to multiple instances of an interval coverage problem and then solve all instances, whose solutions together will lead to a solution to the circular problem. 

We start by finding a point $b^*\in B$ that is covered by the fewest arcs in $\calA$. This can be easily done in $O(n^2\log n)$ time by sorting the endpoints of the arcs of $\calA$ along with the points of $B$. 
We have the following easy observation. 

\begin{observation}\label{obser:10}
For any optimal solution $\opta$ to the circular-point coverage problem, $b^*$ must be covered by an arc $\alpha\in \opta$ and $\opta$ does not have any other arc that fully contains $\alpha$. 
\end{observation}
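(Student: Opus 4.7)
The plan is to handle the two conjuncts separately, both by very short arguments that rest on the positivity of weights rather than on the specific choice of $b^*$.

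For the first conjunct, I would just invoke the definition of a coverage: since $\opta$ is a subset of $\calA$ whose union covers all points of $B$ and since $b^* \in B$, there must exist some $\alpha \in \opta$ with $b^* \in \alpha$. This is immediate and requires nothing about $b^*$ beyond membership in $B$.

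For the second conjunct, I would argue by contradiction. Pick any $\alpha \in \opta$ covering $b^*$ and suppose some other arc $\alpha' \in \opta$ (with $\alpha' \neq \alpha$) fully contains $\alpha$. Then every point of $B$ covered by $\alpha$ is also covered by $\alpha'$, so the set $\opta \setminus \{\alpha\}$ is still a subset of $\calA$ whose union covers $B$. Since the defining point of $\alpha$ lies in $P$ and we have assumed $w(p) > 0$ for all $p \in P$ in Section~\ref{sec:pre}, the weight $w(\alpha) = w(p)$ is strictly positive, so the total weight of $\opta \setminus \{\alpha\}$ is strictly less than that of $\opta$. This contradicts the optimality of $\opta$, completing the proof.

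There is no real obstacle here: the entire argument is a one-line removal argument plus the trivial coverage observation. It is worth remarking that the property of $b^*$ being covered by the fewest arcs of $\calA$ is not used in this observation; the statement in fact holds for every point of $B$, and more generally implies that an optimal $\opta$ contains no two arcs with one containing the other. The special role of $b^*$ will presumably be exploited in the subsequent reduction to $\kappa$ interval-coverage instances, where we branch on which arc covering $b^*$ is chosen to be in $\opta$ and thus the bound $\kappa$ on the number of such arcs controls the total running time.
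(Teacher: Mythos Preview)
Your proposal is correct and matches the paper's own proof essentially line for line: the paper also first notes that $b^*$ is covered because $\opta$ covers all of $B$, and then argues by contradiction that a containing arc $\alpha'$ would let us delete $\alpha$ while still covering $B$, contradicting optimality. Your additional remark that the minimality property of $b^*$ is not used here is accurate; the paper likewise does not invoke it in this observation.
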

\begin{proof}
    As $\opta$ is an optimal solution, all points of $B$ are covered by the union of the arcs of $\opta$. Therefore, $b^*$ must be covered by an arc $\alpha\in \opta$. Assume to the contrary that $\opta$ has another arc $\alpha'$ that fully contains $\alpha$. Then, if we remove $\alpha$ from $\opta$, the remaining arcs of $\opta$ still form a coverage of $B$. But this contradicts the optimality of $\opta$. 
\end{proof}

For each arc $\alpha\in \calA$ that covers $b^*$, define $B_{\alpha}$ as the subset of points of $B$ that are not covered by $\alpha$, and define $\calA_{\alpha}$ as the subset of arcs of $\calA\setminus\{\alpha\}$ that do not fully contain $\alpha$ and are not defined by the point $p({\alpha})$, where $p({\alpha})$ is the point of $P$ that defines $\alpha$. 

We consider the following {\em interval coverage problem for $\alpha$}: 
Find a subset of arcs of $\calA_{\alpha}$ such that their union covers $B_{\alpha}$ and the total weight of the arcs is minimized. 
This is a 1D interval coverage problem (not a circular one). Indeed, let $b_{ccw}$ (resp., $b_{cw}$) be the first point of $B$ counterclockwise from $b^*$ not covered by $\alpha$ (see Figure~\ref{fig:interval_coverage10}). By definition, no arc of $\calA_{\alpha}$ fully contains $\alpha$. Hence, no arc of $\calA_{\alpha}$ contains all three points $b_{cw}$, $b^*$, and $b_{ccw}$. This means that we can break the circle $C$ at $b^*$ into an arc, denoted by $C_{b^*}$, such that every arc of $\calA_{\alpha}$ is contained in $C_{b^*}$. Furthermore, we can map $C_{b^*}$ to the $x$-axis $\ell$ from left to right containing all points of $B_{\alpha}$ from $b_{ccw}$ counterclockwise to $b_{cw}$, so that each arc of $\calA_{\alpha}$ becomes an interval of $\ell$. Therefore, the problem is essentially a 1D interval coverage problem. Let $A^*_{\alpha}$ denote an optimal solution of the problem. 

\begin{figure}[t]
\begin{center}
\includegraphics[height=1.7in]{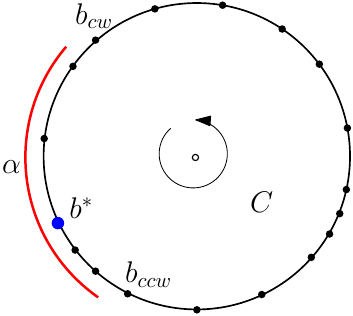}
\caption{Illustration of the definitions of $b_{ccw}$ and $b_{cw}$.}
\label{fig:interval_coverage10}
\end{center}
\vspace{-0.15in}
\end{figure}

For any subset $\calA'$ of arcs of $\calA$, we use $W(\calA')$ denote the total weight of all arcs of $\calA'$. 

By slightly abusing the notation, let $\calA_{b^*}$ denote the subset of arcs of $\calA$ that contain $b^*$. 
The following lemma explains why we are interested in the above interval coverage problem. 

\begin{lemma}\label{lem:20}
Suppose that $\alpha'=\text{\em argmin}_{\alpha\in \calA_{b^*}} (w(\alpha)+W(A^*_{\alpha}))$, i.e., 
among all arcs $\alpha\in \calA_{b^*}$, $\alpha'$ is the one with the minimum $w(\alpha)+W(A^*_{\alpha})$. Then, $\{\alpha'\}\cup A^*_{\alpha'}$ is an optimal solution to the circular-point coverage problem for $\calA$ and $B$.
\end{lemma}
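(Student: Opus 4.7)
The plan is to show two things: first, that $\{\alpha'\}\cup A^*_{\alpha'}$ is a valid covering of $B$, and second, that its total weight equals that of any optimal solution $\opta$. Validity is nearly immediate: $\alpha'$ covers $b^*$ together with all points of $B$ lying on $\alpha'$, and by the definition of the interval coverage problem for $\alpha'$, $A^*_{\alpha'}$ covers $B_{\alpha'}=B\setminus\alpha'$, so the union covers $B$. Moreover, by the definition of $\calA_{\alpha'}$, the arc $\alpha'$ is not in $A^*_{\alpha'}$, and hence $W(\{\alpha'\}\cup A^*_{\alpha'})=w(\alpha')+W(A^*_{\alpha'})$.

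For optimality, I would start from an arbitrary optimal solution $\opta$ and use Observation~\ref{obser:10} to pick the arc $\alpha\in\opta$ that covers $b^*$; thus $\alpha\in\calA_{b^*}$. The central step is to argue that $\opta\setminus\{\alpha\}$ is a feasible solution to the interval coverage problem for $\alpha$, i.e.\ that $\opta\setminus\{\alpha\}\subseteq\calA_\alpha$ and covers $B_\alpha$. The containment needs three observations: (i) no arc of $\opta\setminus\{\alpha\}$ fully contains $\alpha$, by Observation~\ref{obser:10}; (ii) no arc of $\opta\setminus\{\alpha\}$ is defined by $p(\alpha)$, since by Corollary~\ref{corollary:circular-point} no two arcs of $\opta$ share a defining point; (iii) clearly $\alpha$ itself is excluded. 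For the covering, since $\opta$ covers $B$ and the only points of $B$ that $\alpha$ might contribute are those on $\alpha$, the remaining arcs of $\opta$ must cover all of $B_\alpha$.

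Once feasibility is established, we get $W(A^*_\alpha)\le W(\opta\setminus\{\alpha\})=W(\opta)-w(\alpha)$, hence $w(\alpha)+W(A^*_\alpha)\le W(\opta)$. By the choice of $\alpha'$ as the minimizer over all $\alpha\in\calA_{b^*}$, we obtain $w(\alpha')+W(A^*_{\alpha'})\le w(\alpha)+W(A^*_\alpha)\le W(\opta)$. Combining this with $W(\{\alpha'\}\cup A^*_{\alpha'})=w(\alpha')+W(A^*_{\alpha'})$ from the first paragraph yields that $\{\alpha'\}\cup A^*_{\alpha'}$ is a feasible covering whose weight is at most $W(\opta)$; by the optimality of $\opta$, these weights are equal and the lemma follows.

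The main obstacle is the bookkeeping in step two, namely verifying that $\opta\setminus\{\alpha\}$ actually lives in $\calA_\alpha$. This is where we must simultaneously invoke Observation~\ref{obser:10} (to rule out arcs that fully contain $\alpha$) and Corollary~\ref{corollary:circular-point} (to rule out a second arc of $\opta$ defined by $p(\alpha)$); if either property failed, $\opta\setminus\{\alpha\}$ could include arcs disallowed from $\calA_\alpha$ and the inequality $W(A^*_\alpha)\le W(\opta)-w(\alpha)$ would break.
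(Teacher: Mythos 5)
Your proof is correct and follows essentially the same route as the paper: use Observation~\ref{obser:10} and Corollary~\ref{corollary:circular-point} to show that $\opta\setminus\{\alpha\}$ is a feasible solution to the interval coverage problem for the arc $\alpha\in\opta$ covering $b^*$, then bound $w(\alpha')+W(A^*_{\alpha'})\le w(\alpha)+W(A^*_\alpha)\le W(\opta)$ and combine with feasibility of $\{\alpha'\}\cup A^*_{\alpha'}$. Your explicit note that $\alpha'\notin A^*_{\alpha'}$ (so the weight of $\{\alpha'\}\cup A^*_{\alpha'}$ really is $w(\alpha')+W(A^*_{\alpha'})$) is a small but welcome bit of care that the paper leaves implicit.
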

\begin{proof}
Let $\opta$ be an optimal solution to the circular-point coverage problem for $\calA$ and $B$. By definition, the arcs of $\{\alpha'\}\cup A^*_{\alpha'}$ form a coverage of $B$. Since $\opta$ is an optimal solution, we have $W(\opta)\leq W(\{\alpha'\}\cup A^*_{\alpha'})=w(\alpha')+W(A^*_{\alpha'})$. In the following, we show that $w(\alpha')+W(A^*_{\alpha'})\leq W(\opta)$, which will prove the lemma. 

By Observation~\ref{obser:10}, $b^*$ is covered by an arc $\alpha\in \opta$ and $\opta$ does not have another arc that fully contains $\alpha$. By Corollary~\ref{corollary:circular-point}, no two arcs of $\opta$ are defined by the same point of $P$. Consequently, by the definition of $\calA_{\alpha}$, we have $\opta\setminus\{\alpha\}\subseteq \calA_{\alpha}$. Since $\alpha$ does not cover any point of $B_{\alpha}$ by definition, $\opta\setminus\{\alpha\}$ must form a coverage of $B_{\alpha}$, and thus $W(A^*_{\alpha})\leq W(\opta\setminus\{\alpha\})$. As such, by the definition of $\alpha'$, we obtain $w(\alpha')+W(A^*_{\alpha'})\leq w(\alpha) + W(A^*_{\alpha}) \leq w(\alpha)+W(\opta\setminus\{\alpha\})=W(\opta)$.
\end{proof}


With Lemma~\ref{lem:20}, it is attempting to solve the interval coverage problem for all arcs $\alpha\in \calA_{b^*}$. However, this would not be efficient because solving the problem for each arc $\alpha$ takes $\Omega(n^2)$ time due to  $|\calA_{\alpha}|=\Omega(n^2)$ in the worst case (and thus the overall time complexity would be $\Omega(n^3)$ in the worst case). Instead, we take an ``indirect'' strategy as follows. 


\paragraph{An indirect route.}
Let $H_{\alpha}$ denote the set of half-planes of $H$ that define the points of $B_{\alpha}$. Let $P_{\alpha}=P\setminus\{p({\alpha})\}$, where $p({\alpha})$ is the point of $P$ that defines $\alpha$. 
Given the arc $\alpha$, $B_{\alpha}$ and $P_{\alpha}$ can be easily computed in $O(n)$ time. Clearly, $|H_{\alpha}|=|B_{\alpha}|\leq n-1$ and $|P_{\alpha}|=n-1$.

For each arc $\alpha\in \calA_{b^*}$, our strategy is to design an efficient algorithm to compute  a subset $P'_{\alpha}\subseteq P_{\alpha}$ and a value $W_{\alpha}$ such that $P'_{\alpha}$ is a hitting set for $H_{\alpha}$
and $W(P'_{\alpha})\leq W_{\alpha}\leq  W(A^*_{\alpha})$. We refer to $(P'_{\alpha},W_{\alpha})$ as an ``indirect solution'' to the interval coverage problem for $\alpha$. 
The following lemma explains why such indirect solutions are sufficient for our purpose. 

\begin{lemma}\label{lem:optweightedcircle}
Suppose that $\alpha^*=\text{\em argmin}_{\alpha\in \calA_{b^*}} (w(\alpha)+W_{\alpha})$.
Then $\{p(\alpha^*)\}\cup P'_{\alpha^*}$ is an optimal solution to the half-plane hitting set problem for $P$ and $H$, where $p(\alpha^*)$ is the point of $P$ that defines $\alpha^*$.
\end{lemma}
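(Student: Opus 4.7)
The plan is to verify two things about $S := \{p(\alpha^*)\}\cup P'_{\alpha^*}$: first, that $S$ is a hitting set for $H$; second, that $W(S)\leq W(\opta)$, where $\opta$ denotes an optimal solution to the circular-point coverage problem on $B$ and $\calA$. Once both are in hand, Corollary~\ref{corollary:circular-point}(1) identifies $W(\opta)$ with the minimum weight of any hitting set for $H$, which forces $S$ to be optimal.

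For the hitting-set claim I would argue pointwise. Let $h\in H$ and let $b_h\in B$ be the point of $C$ defined by $h$. If $b_h$ is covered by $\alpha^*$, then by the construction of $\calA$ in Section~\ref{sec:trans}, an arc defined by $p(\alpha^*)$ covers exactly those points of $B$ whose defining half-planes are hit by $p(\alpha^*)$, so $p(\alpha^*)\in S$ hits $h$. If $b_h$ is not covered by $\alpha^*$, then $b_h\in B_{\alpha^*}$ by the definition of $B_{\alpha^*}$, hence $h\in H_{\alpha^*}$, and the indirect-solution guarantee that $P'_{\alpha^*}$ is a hitting set for $H_{\alpha^*}$ supplies some point of $P'_{\alpha^*}\subseteq S$ that hits $h$.

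For the weight bound I would chain the following inequalities. Since $P'_{\alpha^*}\subseteq P_{\alpha^*}=P\setminus\{p(\alpha^*)\}$, the union defining $S$ is disjoint, so $W(S)=w(p(\alpha^*))+W(P'_{\alpha^*})=w(\alpha^*)+W(P'_{\alpha^*})$, using the arc-weight convention $w(\alpha)=w(p(\alpha))$. The indirect-solution inequality $W(P'_{\alpha^*})\leq W_{\alpha^*}$ gives $W(S)\leq w(\alpha^*)+W_{\alpha^*}$. Applying the minimizer property of $\alpha^*$ to the distinguished arc $\alpha'$ from Lemma~\ref{lem:20} yields $w(\alpha^*)+W_{\alpha^*}\leq w(\alpha')+W_{\alpha'}$, and the other half of the indirect-solution guarantee, $W_{\alpha'}\leq W(A^*_{\alpha'})$, together with Lemma~\ref{lem:20} gives $w(\alpha')+W_{\alpha'}\leq w(\alpha')+W(A^*_{\alpha'})=W(\opta)$. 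Concatenating, $W(S)\leq W(\opta)$, and combining with the hitting-set claim finishes the proof.

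There is no real obstacle beyond bookkeeping once the pieces (the definitions of $B_{\alpha}$ and $\calA_{\alpha}$, the two defining inequalities of the indirect solution $(P'_\alpha,W_\alpha)$, Lemma~\ref{lem:20}, and Corollary~\ref{corollary:circular-point}(1)) are lined up. The two small points that need care are that $p(\alpha^*)\notin P_{\alpha^*}$, so the weights of $p(\alpha^*)$ and $P'_{\alpha^*}$ add without double-counting, and that $w(p(\alpha^*))=w(\alpha^*)$ by the arc-weight convention; both are immediate from the definitions in Section~\ref{sec:trans}.
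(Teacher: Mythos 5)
Your proof is correct and follows essentially the same approach as the paper: show $\{p(\alpha^*)\}\cup P'_{\alpha^*}$ is a hitting set, then chain $w(\alpha^*)+W_{\alpha^*}\leq w(\alpha')+W_{\alpha'}\leq w(\alpha')+W(A^*_{\alpha'})=W(\opta)=W(\optp)$ via Lemma~\ref{lem:20} and Corollary~\ref{corollary:circular-point}. The only cosmetic difference is that you observe $p(\alpha^*)\notin P'_{\alpha^*}$ to get an exact equality $W(S)=w(\alpha^*)+W(P'_{\alpha^*})$, whereas the paper only needs (and uses) the inequality $W(S)\leq w(p(\alpha^*))+W(P'_{\alpha^*})$; both variants close the argument identically.
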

\begin{proof}
Let $\optp$ be an optimal solution to the half-plane hitting set problem, and $\opta$ an optimal solution to the circular-point coverage problem for $\calA$ and $B$. By Corollary~\ref{corollary:circular-point}, $W(\optp)=W(\opta)$. 

Since all points of $B\setminus B_{\alpha^*}$ are covered by the arc $\alpha^*$, by definition, all half-planes of $H\setminus H_{\alpha^*}$ are hit by $p(\alpha^*)$. Since each half-plane of $H_{\alpha^*}$ is hit by a point of $P'_{\alpha^*}$, we obtain that $\{p(\alpha^*)\}\cup P'_{\alpha^*}$ is a hitting set for $H$. As such, it holds that $W(\optp)\leq w(p(\alpha^*))+W(P'_{\alpha^*})=w(\alpha^*)+W(P'_{\alpha^*})\leq w(\alpha^*)+W_{\alpha^*}$. In the following, we argue that $w(\alpha^*)+W_{\alpha^*}\leq W(\optp)$, which will lead to  $W(\optp)= w(p(\alpha^*))+W(P'_{\alpha^*})$ and thus prove the lemma.

Suppose that $\alpha'=\text{argmin}_{\alpha\in \calA_{b^*}} (w(\alpha)+W(A^*_{\alpha}))$. 
By Lemma~\ref{lem:20}, $W(\opta)=w(\alpha')+W(A^*_{\alpha'})$. By the definition of $\alpha^*$, $w(\alpha^*)+W_{\alpha^*}\leq w(\alpha')+W_{\alpha'}$. Since $W_{\alpha'}\leq W(A^*_{\alpha'})$, we obtain $w(\alpha^*)+W_{\alpha^*}\leq w(\alpha')+W(A^*_{\alpha'})=W(\opta)=W(\optp)$.
\end{proof}

\subsection{Computing the interval coverage indirection solutions}

We now present our algorithm to compute an indirect solution $(P'_{\alpha},W_{\alpha})$ for the interval coverage problem for each $\alpha$; the runtime of the algorithm is $O(n^{3/2}\log^2 n)$. In the following, we first describe the algorithm, then prove its correctness, and finally discuss how to implement it efficiently.

\subsubsection{Algorithm description}
\label{sec:description}

Recall that $|H_{\alpha}|=|B_{\alpha}|\leq n-1$ and $|P_{\alpha}|=n-1$. For notational convenience, we simple let $|H_{\alpha}|=|B_{\alpha}|=|P_{\alpha}|=n$. 

We first sort the points of $B_{\alpha}$ counterclockwise along $C_{\alpha}$ as $b_1,b_2,\ldots,b_n$. 
For each $b_i$, let $h_i$ denote its defining half-plane in $H_{\alpha}$. For each half-plane $h$, denote by $P_{\alpha}(h)$ the subset of points of $P_{\alpha}$ inside $h$, and $P_{\alpha}(\overline{h})$ the subset of points of $P_{\alpha}$ outside $h$.

Our algorithm processes the half-planes of $H_{\alpha}$ in their index order. For each half-plane $h_i\in H_{\alpha}$, the algorithm computes a value $\delta_i$. The algorithm also maintains a value $cost(p)$ for each point $p\in P_{\alpha}$, which is initialized to its weight $w(p)$. The pseudocode of the algorithm is given in Algorithm~\ref{algo:10}. 

\begin{algorithm}[h]
	\caption{}
	\label{algo:10}
	\SetAlgoNoLine
	\KwIn{$P_{\alpha}$ and $H_{\alpha}=\{h_1,h_2,\ldots,h_n\}$  
 }
	\KwOut{An indirect solution $(P'_{\alpha},W_{\alpha})$} \BlankLine
	$cost(p)\leftarrow w(p)$, for all points $p\in P_{\alpha}$\;
	\For{$i\leftarrow 1$ \KwTo $n$}
	{
        $\delta_i\leftarrow \min_{p\in P_{\alpha}(h_i)} cost(p)$\tcp*[r]{\findmin\ Operation}         
        \label{ln:findmin}
        $cost(p)\leftarrow w(p) + \delta_i$ for all points $p\in P_{\alpha}(\overline{h_i})$\tcp*[r]{\reset\ Operation} 
        \label{ln:reset}
	}
    \Return $\delta_n$ as $W_{\alpha}$\tcp*[r]{$P'_{\alpha}$ consists of the points $p$ whose weights $w(p)$ are included in $\delta_n$ and $P'_{opt}$ can be found by the standard backtracking technique} 
\end{algorithm}

The algorithm is essentially a dynamic program. We will prove later that the value $W_{\alpha}$ returned by the algorithm, which is $\delta_n$, is smaller than or equal to $W(A^*_{\alpha})$. According to the algorithm, $\delta_n$ is equal to the total weight of a subset of points of $P_{\alpha}$; we let $P'_{\alpha}$ be the subset. To find $P'_{\alpha}$, we can slightly modify the algorithm following the standard dynamic programming backtracking technique. Specifically, if $\delta_n$ is equal to $cost(p)$ for some point $p\in P_{\alpha}(h_n)$, then $p$ hits $h_n$ and we add $p$ to $P'_{\alpha}$. Suppose that $cost(p)$ is equal to $w(p)+\delta_i$ for some index $i$. Then, $cost(p)$ does not change in the $j$-th iteration for any $i<j<n-1$, meaning that $p\in P_{\alpha}(h_j)$. Hence, $p$ hits all half-planes $h_{i+1},\ldots,h_n$. Further, by definition $\delta_i$ is equal to $cost(p')$ for some point $p'\in P_{\alpha}(h_i)$; we add $p'$ to $P'_{\alpha}$. We continue this backtracking process until a point whose cost is equal to its own weight is added to $P'_{\alpha}$ (in which case following the above analysis $P'_{\alpha}$ is a hitting set for $H_{\alpha}$). Note that in the above process if a point $p$ is added to $P'_{\alpha}$ multiple times, then $w(p)$ is included in $\delta_n$ for each copy of $p$. As such, since $W(P'_{\alpha})$ represents the total weight of all distinct points of $P'_{\alpha}$, it holds that $W(P'_{\alpha})\leq \delta_n=W_{\alpha}$. To prove the correctness of the algorithm, it now remains to prove that $W_{\alpha}\leq W(A^*_{\alpha})$, which will be done in Section~\ref{sec:correct}.

For reference purposes, we use \findmin\ to refer to the operation in Line~\ref{ln:findmin} and use \reset\ to refer to the operation in Line~\ref{ln:reset} of Algorithm~\ref{algo:10}. The efficiency of the algorithm hinges on how to implement these two {\em key operations}, which will be discussed later in Section~\ref{sec:imple}.

\subsubsection{Algorithm correctness: $\boldsymbol{W_{\alpha}\leq W(A^*_{\alpha})}$}
\label{sec:correct}
As discussed above, it remains to prove that $W_{\alpha}\leq W(A^*_{\alpha})$. To this end, we use an ``algorithmic proof'': 
We first show that an optimal solution $A^*_{\alpha}$ can be found by a dynamic programming algorithm and then use the algorithm to argue that $W_{\alpha}\leq W(A^*_{\alpha})$.

\paragraph{A dynamic programming algorithm.}
With $B_{\alpha}$ and $\calA_{\alpha}$, computing $A^*_{\alpha}$ can be done by a straightforward dynamic programming algorithm~\cite{ref:PedersenAl22}. To prove  $W_{\alpha}\leq W(A^*_{\alpha})$, we first describe the algorithm. 

Recall that the points of $B_{\alpha}$ are ordered along $C_{\alpha}$ as $b_1,b_2,\ldots,b_n$. As discussed in Section~\ref{sec:intro}, we can map them on the $x$-axis $\ell$ so that $b_1$ (resp., $b_n$) is the leftmost (resp., rightmost) point of $B_{\alpha}$ and each arc of $\calA_{\alpha}$ becomes an interval on $\ell$. 
For convenience, we use $b_0$ to refer to a point of $\ell$ to the left of $b_1$ so that no arc of $\calA_{\alpha}$ contains it. 

For each arc $\beta\in \calA_{\alpha}$, define $I_{\beta}$ as the index of the rightmost point of $B_{\alpha}\cup\{b_0\}$ strictly to the left of the left endpoint of $\beta$. With the point $b_0$, $I_{\beta}$ is well defined. For $1\leq i\leq j\leq n$, define $B_{\alpha}[i,j]=\{b_i,b_{i+1},\ldots,b_j\}$ and $H_{\alpha}[i,j]=\{h_i,h_{i+1},\ldots,h_j\}$.

For each $i\in [1,n]$, define $\delta_i^*$ as the minimum total weight of a subset of arcs of $\calA_{\alpha}$ whose union covers all points of $B_{\alpha}[1,i]$. The goal is to compute $\delta_n^*$, which is equal to $W(A^*_{\alpha})$. For convenience, we let $\delta^*_0=0$.
For each arc $\beta\in \calA_{\alpha}$, define $cost(\beta)=w(\beta)+\delta^*_{I_{\beta}}$. 
One can verify that the following holds: $\delta_i^*=\min_{\beta\in \calA_{\alpha}(b_i)}cost(\beta)$, where $\calA_{\alpha}(b_i)$ is the subset of arcs of $\calA_{\alpha}$ that cover $b_i$. This is the recursive relation of the dynamic program.

We sweep a point $q$ on $\ell$ from left to right. During the sweep, the subset $\calA_{\alpha}(q)\subseteq \calA_{\alpha}$ of arcs that cover $q$ is maintained. We assume that the cost of each arc of $\calA_{\alpha}(q)$ is already known and the values $\delta_i^*$ for all points $b_i\in B_{\alpha}$ to the left of $q$ have been computed. An event happens when $q$ encounters an endpoint of an arc of $\calA_{\alpha}$ or a point of $B_{\alpha}$. 
If $q$ encounters a point $b_i\in B$, then we find the arc of $\calA_{\alpha}(q)$ with the minimum cost and assign the cost value to $\delta_i^*$. If $q$ encounters the left endpoint of an arc $\beta$, we set $cost(\beta)=w(\beta)+\delta^*_{I_{\beta}}$ (assuming that the index $I_{\beta}$ is already knonw) and insert $\beta$ into $\calA_{\alpha}(q)$. If $q$ encounters the right endpoint of an arc $\beta$, we remove $\beta$ from $\calA_{\alpha}(q)$. The algorithm finishes once $q$ encounters $b_n$, at which event $\delta_n^*$ is computed. 

\paragraph{Proving $\boldsymbol{W_{\alpha}\leq W(A^*_{\alpha})}$.}
To prove $W_{\alpha}\leq W(A^*_{\alpha})$, since $W_{\alpha}=\delta_n$ and $W(A^*_{\alpha})=\delta^*_n$, 
it suffices to show that $\delta_n\leq \delta^*_n$. In the following, we argue that $\delta_i\leq \delta_i^*$ holds for all $1\leq i\leq n$. We do so by induction. 

As the base case, we first argue $\delta_1\leq \delta_1^*$. By definition, $\delta_1=\min_{p\in P_{\alpha}(h_1)} w(p)$. For $\delta_1^*$, since $I_{\beta}=0$ for every arc $\beta\in \calA_{\alpha}(b_1)$ and $\delta^*_0=0$, we have $\delta_1^*= \min_{\beta\in \calA_{\alpha}(b_1)} w(\beta)$. 
By definition, an arc $\beta\in \calA_{\alpha}$ covers $b_1$ only if the point $p(\beta)$ of $P_{\alpha}$ defining $\beta$ hits $h_1$, and $w(p(\beta))=w(\beta)$. Therefore, $\beta$ is in $\calA_{\alpha}(b_1)$ only if $p(\beta)$ is in $P_{\alpha}(h_1)$. This implies that $\delta_1\leq \delta_1^*$. 

Consider any $i$ with $2\leq i\leq n$. Assuming that $\delta_{j}\leq \delta^*_{j}$ for all $1\leq j<i$, we now prove $\delta_i\leq \delta^*_i$. Recall that $\delta_i= \min_{p\in P_{\alpha}(h_i)} cost(p)$ and $\delta^*_i= \min_{\beta\in \calA_{\alpha}(b_i)} cost(\beta)$. 
As argued above, each arc $\beta\in \calA_{\alpha}(b_i)$ is defined by a point in $P_{\alpha}(h_i)$ with the same weight. 

Consider an arc $\beta\in \calA_{\alpha}(b_i)$. Let $p$ be the point of $P_{\alpha}(h_i)$ that defines $\beta$. 
To prove $\delta_i\leq \delta_i^*$, it suffices to show that $cost(p)\leq cost(\beta)$. 
By definition, $cost(\beta)=w(\beta)+\delta^*_{I_{\beta}}$. For notational convenience, let $j=I_{\beta}$. By definition, all points of $B_{\beta}[j+1,i]$ are covered by $\beta$ but $b_j$ is not. Therefore, all half-planes of $H_{\alpha}[j+1,i]$ are hit by $p$ but $h_j$ is not. As such, during the \reset\ operation of the $j$-th iteration of Algorithm~\ref{algo:10}, $cost(p)$ will be set to $w(p)+\delta_j$; furthermore, $cost(p)$ will not be reset again during the $i'$-th iteration for all $j+1\leq i'\leq i$. Hence, we have $cost(p)=w(p)+\delta_j$ at the beginning of the $i$-th iteration of the algorithm. Since $\delta_j\leq \delta^*_j$ holds by the induction hypothesis and $w(p)=w(\beta)$, we obtain $cost(p)\leq cost(\beta)$. This proves $\delta_i\leq \delta_i^*$. 

This proves $W_{\alpha}\leq W(A^*_{\alpha})$. The correctness of Algorithm~\ref{algo:10} is thus established. 

\subsubsection{Algorithm implementation}
\label{sec:imple}

We are able to implement the two key operations \findmin\ and \reset\ to achieve an overall time complexity of $O(n^{{3}/{2}} \log^2 n)$ for the entire Algorithm~\ref{algo:10}. The implementation is similar to that for an algorithm of Liu and Wang \cite{ref:LiuOn24} for a line separable unit-disk coverage problem. More specifically, given a set of points and a set of weighted unit disks such that the points are separated from the disk centers by the $x$-axis $\ell$, their problem is to compute a minimum-weight subset of disks whose union covers all points. As such, to cover points, their algorithm uses unit disks (whose centers are all below $\ell$) while ours uses half-planes. We can essentially use the same way as theirs to implement our algorithm with the following changes. Their algorithm initially constructs a cutting~\cite{ref:ChazelleCu93,ref:WangUn23} on the arcs of the boundaries of the disks above the line $\ell$. Instead, we build a cutting on the bounding lines of the half-planes of $H_{\alpha}$. For the rest of the algorithm, we simply replace their disks with our half-planes. The analysis and time complexities are very similar. For completeness, we sketch the main idea in Section~\ref{sec:implesketch}. 

\begin{lemma}\label{lem:indirect}
An indirect solution $(P'_{\alpha},W_{\alpha})$ for the interval coverage problem for each $\alpha\in \calA_{b^*}$ can be computed in $O(n^{{3}/{2}} \log^2 n)$ time. 
\end{lemma}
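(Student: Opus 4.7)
My plan is to implement each iteration of Algorithm~\ref{algo:10} in $O(\sqrt{n}\log^2 n)$ amortized time, so that the $n$ iterations together yield the claimed $O(n^{3/2}\log^2 n)$ bound. The approach directly adapts the cutting-based data structure of Liu and Wang \cite{ref:LiuOn24} for line-separable unit-disk coverage: in their setting the disks' boundary arcs define the geometric primitives, while here the bounding lines $\ell_1,\ldots,\ell_n$ of the half-planes in $H_\alpha$ do. The essential partition each primitive induces on the cells of a cutting, namely \emph{fully-inside}, \emph{fully-outside}, or \emph{crossed}, carries over unchanged, so that the substitution of half-planes for disks is mostly mechanical.

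Concretely, I would build a $(1/r)$-cutting $\Xi$ of $\{\ell_1,\ldots,\ell_n\}$, balanced against $P_\alpha$, with $r=\lceil\sqrt{n}\,\rceil$. Standard constructions \cite{ref:ChazelleCu93} yield $O(r^2)$ triangular cells, each crossed by $O(n/r)$ bounding lines and holding $O(n/r^2)$ points. For each cell $\tau$ I would maintain a local structure tracking the costs of the points assigned to $\tau$; on top of the resulting per-cell minima I would build a global half-plane range-minimum structure that, given any query half-plane, returns the smallest per-cell minimum among cells lying entirely inside the half-plane in $O(\sqrt{n}\log^2 n)$ time.

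In each iteration $i$, the \findmin\ operation combines (i) a global range-minimum query over cells fully contained in $h_i$ and (ii) an explicit scan of the $O(r)$ cells that $\ell_i$ crosses, which together contain $O(r\cdot n/r^2)=O(\sqrt{n})$ individual points. The \reset\ operation is symmetric: cells entirely outside $h_i$ receive a lazy bulk update that only bumps a baseline timestamp $t(\tau)$, under which the current cost of every unmaterialized $p\in\tau$ is implicitly $w(p)+\delta_{t(\tau)}$; only the $O(\sqrt{n})$ points in the boundary-crossing cells are updated explicitly. I would keep each cell in one of two modes, \emph{bulk} when its points share a common baseline and \emph{individual} when it has just been crossed, with each mode switch costing $O(|\tau|\log|\tau|)=O((n/r^2)\log n)$ charged to the crossing that triggered it. Summing gives $O(\sqrt{n}\log^2 n)$ per iteration (global query $O(\sqrt{n}\log^2 n)$, boundary-cell scan $O(\sqrt{n}\log n)$, and $O(\sqrt{n})$ updates to the global structure costing $O(\sqrt{n}\log n)$ in total). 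The backtracking that recovers $P'_\alpha$ from the sequence of chosen witnesses adds only $O(n\log n)$ overall.

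The hardest part will be verifying that the lazy bulk timestamps and the global half-plane range-minimum structure can be maintained consistently under the two-mode cell representation while still supporting queries and mode-switch updates in $O(\sqrt{n}\log^2 n)$ time; this is precisely the analysis carried out in \cite{ref:LiuOn24}, and once the substitution of half-planes for disks is in place, the argument transfers without essential change.
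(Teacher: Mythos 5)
Your high-level plan is the same one the paper uses: adapt the cutting-based machinery of Liu and Wang \cite{ref:LiuOn24} by replacing disk boundary arcs with the bounding lines of $H_\alpha$, pick $r=\Theta(\sqrt n)$, and split each \findmin/\reset\ into a part over cells fully on one side of $h_i$ and a part over the $O(n/r)$ points in boundary-crossing cells. That much transfers cleanly, and the final arithmetic ($O(\sqrt n\log^2 n)$ per iteration, $O(n^{3/2}\log^2 n)$ total) matches the paper.

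However, the specific mechanism you sketch has a gap. You build a \emph{flat} $(1/r)$-cutting with $O(r^2)=O(n)$ cells, keep per-cell timestamps, and layer a ``global half-plane range-minimum structure over per-cell minima'' on top. Two problems. First, in each \reset\ you want to ``bump a baseline timestamp $t(\tau)$'' on every cell $\tau$ entirely outside $h_i$, but there can be $\Theta(r^2)=\Theta(n)$ such cells; touching each even once per iteration already gives $\Theta(n^2)$, which defeats the purpose. You call this ``lazy,'' but with a flat cutting there is no mechanism that lets you encode a bump affecting $\Theta(n)$ cells in $o(n)$ time while still being able to read off the correct effective per-cell minimum later. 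Second, the global half-plane range-minimum structure is asserted to answer queries in $O(\sqrt n\log^2 n)$ and to absorb updates cheaply, but this is exactly where the work is: the per-cell minima change under every bulk reset, so the structure is fully dynamic, and you give no account of how it stays consistent.

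The paper resolves both issues at once by using a \emph{hierarchical} $(1/r)$-cutting, i.e., the tree of cuttings $\Xi_0,\ldots,\Xi_k$. Lazy additive offsets $\lambda(\sigma)$ are stored at internal cells and the true $cost(p)$ is $w(p)+\lambda(p)+\sum_{\sigma'\in X(\sigma)}\lambda(\sigma')$ along the root-to-leaf path. A \reset\ for $h_i$ then touches only the $O(r)$ cells of $\phi_2(h_i)$ (cells crossed by $\ell_i$, across all levels) and, for each, its $O(1)$ children fully outside $h_i$: the whole subtree of a fully-outside child is handled by adjusting that single child's $\lambda$, so the $\Theta(r^2)$ fully-outside leaf cells are never enumerated. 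Dually, \findmin\ for $h_i$ reads $minCost(\cdot)$ at the $O(r)$ children-of-crossed cells that lie fully inside $h_i$ (the sets $\phi_3(h_i)$) plus scans the $O(n/r)$ points in the crossed leaf cells; no separate range-minimum structure is needed, because the cutting tree itself serves that role and the $minCost$ invariants are maintained bottom-up in $O(\log r)$ per touched cell. You should replace the flat cutting plus global range-minimum structure with this hierarchical cutting and path-lazy $\lambda$ bookkeeping; once you do, the per-iteration cost becomes $O((r+n/r)(\log r)(\log r+\log n))$ and setting $r=\sqrt n$ gives the claimed bound.
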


\subsection{Putting it all together}

Combining Lemmas~\ref{lem:optweightedcircle} and \ref{lem:indirect}, the hitting set problem on $P$ and $H$ can be solved in $O(\kappa\cdot n^{{3}/{2}} \log^2 n)$ time, where $\kappa=|\calA_{b^*}|\leq n$. Equivalently, $\kappa$ is the minimum number of points of $P$ covered by any half-plane of $H$. We thus obtain the following result. 

\begin{theorem}
Given a set $P$ of $n$ weighted points and a set $H$ of $n$ half-planes in the plane, one can compute in $O(n^{5/2}\log^2 n)$ time a minimum-weight subset of $P$ as a hitting set for $H$. More specifically, the runtime is $O(\kappa\cdot n^{{3}/{2}} \log^2 n)$, where $\kappa$ is the minimum number of points of $P$ covered by any half-plane of $H$. 
\end{theorem}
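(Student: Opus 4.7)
The plan is to orchestrate the ingredients already established; the proof is essentially a runtime accounting atop the earlier lemmas. First, I apply the reduction of Section~\ref{sec:trans}: construct the unit circle $C$, the point set $B$ (one point per half-plane of $H$), and the arc family $\calA=\bigcup_{p\in P}A(p)$. This build takes $O(n^2)$ time and yields $|B|=n$ and $|\calA|=O(n^2)$. By Corollary~\ref{corollary:circular-point}, it then suffices to solve the circular-point coverage problem on $(B,\calA)$ and read off the defining points of the chosen arcs.

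Next, I locate the pivot $b^*\in B$ covered by the fewest arcs of $\calA$, in $O(n^2\log n)$ time by sorting the endpoints of $\calA$ together with $B$ along $C$. Writing $\kappa:=|\calA_{b^*}|$, the pairwise disjointness of the arcs in each $A(p)$ forces at most one arc of $A(p)$ to contain $b^*$, so $\kappa\leq n$; and by construction $\kappa$ coincides with the minimum number of points of $P$ contained in any half-plane of $H$, recovering the refined parameter advertised in the theorem.

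Then I loop over $\calA_{b^*}$. For each $\alpha\in\calA_{b^*}$, Lemma~\ref{lem:indirect} produces an indirect solution $(P'_{\alpha},W_{\alpha})$ in $O(n^{3/2}\log^2 n)$ time, so the loop costs $O(\kappa\cdot n^{3/2}\log^2 n)$ in total. I finish by choosing $\alpha^*=\arg\min_{\alpha\in \calA_{b^*}}\bigl(w(\alpha)+W_{\alpha}\bigr)$ and returning $\{p(\alpha^*)\}\cup P'_{\alpha^*}$; Lemma~\ref{lem:optweightedcircle} asserts that this set is a minimum-weight hitting set for $H$. Summing the preprocessing and the loop gives total time $O(n^2\log n+\kappa\cdot n^{3/2}\log^2 n)$, which matches the refined bound and, via $\kappa\leq n$, yields the worst-case bound $O(n^{5/2}\log^2 n)$.

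No conceptual obstacle remains at this stage: the heavy technical lifting has already been absorbed into Lemma~\ref{lem:indirect}, whose efficient implementation via a cutting on the bounding lines of $H_\alpha$ (sketched in Section~\ref{sec:implesketch}) is the real bottleneck of the paper, while the theorem itself is a straightforward composition of Lemmas~\ref{lem:optweightedcircle} and~\ref{lem:indirect}. The only small subtlety worth double-checking is that every step of the outer procedure accesses only $P$ and $H$ (not the explicit $O(n^2)$ arcs of $\calA_{\alpha}$), which is exactly what the indirect-solution paradigm of Section~\ref{sec:description} guarantees.
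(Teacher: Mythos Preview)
Your proposal is correct and follows exactly the paper's approach: the paper's own proof of the theorem is a one-line ``combining Lemmas~\ref{lem:optweightedcircle} and~\ref{lem:indirect}'' together with $\kappa=|\calA_{b^*}|\le n$, and you have simply unpacked that sentence with an explicit runtime accounting. One small slip: the sum $O(n^2\log n+\kappa\cdot n^{3/2}\log^2 n)$ does \emph{not} literally match the refined bound $O(\kappa\cdot n^{3/2}\log^2 n)$ when $\kappa=o(\sqrt{n}/\log n)$, since the $n^2\log n$ preprocessing then dominates; the paper glosses over this too, and it is easily fixed by noting that finding $b^*$ amounts to finding the half-plane of $H$ containing the fewest points of $P$, which half-plane range counting handles in $O(n^{3/2})$ time without ever materializing $\calA$. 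The headline $O(n^{5/2}\log^2 n)$ bound is unaffected either way.
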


\section{Implementation of Algorithm~\ref{algo:10}}
\label{sec:implesketch}

In this section, we discuss the implementation of Algorithm~\ref{algo:10}. Since it is quite similar to the algorithm for unit disks in \cite{ref:LiuOn24}, we only sketch the main algorithm framework here. We refer the reader to the detailed discussion and analysis in \cite{ref:LiuOn24}.

To simplify the notation, we temporarily let $P=P_{\alpha}$ and $H=H_{\alpha}$ in the rest of this section. 
Let $L$ denote the set of the bounding lines of the half-planes of $H$. Again to simplify the notation, we let $n=|P|=|H|=|L|$. 

\paragraph{\bf Cuttings.}
Our algorithm will construct a cutting on the lines of $L$~\cite{ref:ChazelleCu93}. 
For a parameter $r$ with $1 \leq r \leq n$, a {\em $(1/r)$-cutting} $\Xi$ for $L$ is a collection of cells (each of which is a triangle) whose union covers the entire plane and whose interiors are pairwise disjoint such that
$|L_{\sigma}| \leq n/r$, where $L_{\sigma}$ is the subset of lines of $L$ that cross the interior of $\sigma$ ($L_{\sigma}$ is called the {\em conflict list} of $\sigma$).
Let $H_{\sigma}$ be the set of the half-planes of $H$ whose bounding lines are in $L_{\sigma}$. 
The {\em size} of $\Xi$ is the number of its cells. 

We say that a cutting $\Xi'$ \emph{$c$-refines} another cutting $\Xi$ if each cell of $\Xi'$ is completely inside a cell of $\Xi$ and each cell of $\Xi$ contains at most $c$ cells of $\Xi'$. Let $\Xi_0$ denote the cutting with a single cell that is the entire plane. We define cuttings $\{\Xi_0, \Xi_1, ..., \Xi_k\}$, in which each $\Xi_i$, $1 \leq i \leq k$, is a $(1/\rho^i)$-cutting of size $O(\rho^{2i})$ that $c$-refines $\Xi_{i - 1}$, for two constants $\rho$ and $c$. By setting $k = \lceil \log_\rho r \rceil$, the last cutting $\Xi_k$ is a $(1/r)$-cutting. The sequence $\{\Xi_0, \Xi_1, ..., \Xi_k\}$ is called a {\em hierarchical $(1/r)$-cutting} for $L$. If a cell $\sigma'$ of $\Xi_{i - 1}$, $1 \leq i \leq k$, contains a cell $\sigma$ of $\Xi_i$, we say that $\sigma'$ is the \emph{parent} of $\sigma$ and $\sigma$ is a \emph{child} of $\sigma'$.  We can also define {\em ancestors} and {\em descendants} correspondingly.
As such, the hierarchical $(1/r)$-cutting forms a tree structure with the single cell of $\Xi_0$ as the root. We often use $\Xi$ to denote the set of all cells in all cuttings $\Xi_i$, $0\leq i\leq k$.
The total number of cells of $\Xi$ is $O(r^2)$~\cite{ref:ChazelleCu93}. 
A hierarchical $(1/r)$-cutting for $L$ can be computed in $O(nr)$ time~\cite{ref:ChazelleCu93}, along with the conflict lists $L_{\sigma}$ (and thus $H_{\sigma}$) for all cells $\sigma\in \Xi$. 


In what follows, we first discuss a preprocessing step in Section~\ref{sec:preprocess}. The algorithms for handling the two key operations are described in the subsequent two subsections, respectively. Section~\ref{sec:summary} finally summarizes everything.

\subsection{Preprocessing}
\label{sec:preprocess}
We compute a hierarchical $(1/r)$-cutting $\{\Xi_0, \Xi_1, ..., \Xi_k\}$ for $L$ in $O(nr)$ time~\cite{ref:ChazelleCu93,ref:WangUn23}, for a parameter $1\leq r\leq n$ to be determined later. Let $\Xi$ denote the set of cells in all these cuttings. 
Using the conflict lists $L_{\sigma}$, for each half-plane $h_i\in H$, we compute a list $\phi(h_i)$  of all cells $\sigma\in \Xi$ such that $h_i\in L_{\sigma}$. This can be done in $O(nr)$ time. 

For any region $R$ in the plane, let $P(R)$ denote the subset of points of $P$ that are inside $R$. 

We compute the subset $P(\sigma)$ of all cells $\sigma$ in the last cutting $\Xi_k$. This can be done by a point location procedure in $O(n\log r)$ time. 


Note that a cell $\sigma\in \Xi$ is the ancestor of another cell $\sigma'\in \Xi$ (alternatively, $\sigma'$ is a descendant of $\sigma$) if and only if $\sigma$ fully contains $\sigma'$. 
For convenience, we consider $\sigma$ an ancestor of itself but not a descendant of itself.  
Let $X(\sigma)$ denote the set of all ancestors of $\sigma$ and $Y(\sigma)$ the set of all descendants of $\sigma$.
Hence, $\sigma$ is in $X(\sigma)$ but not in $Y(\sigma)$. Let $Z(\sigma)$ denote the set of all children of $\sigma$

For each point $p\in P$, we associate with it a variable $\lambda(p)$. 
For each cell $\sigma\in \Xi$, we associate with it two variables: $minCost(\sigma)$ and $\lambda(\sigma)$.
If $|P(\sigma)|=\emptyset$, then we let $minCost(\sigma)=\infty$ and $\lambda(\sigma)=0$. 
Our algorithm for handling the two key operations will maintain the following two invariants. 
\begin{invariant}
For any point $p\in P$, $cost(p)=w(p)+\lambda(p)+\sum_{\sigma'\in X(\sigma)}\lambda(\sigma')$, where $\sigma$ is the cell of $\Xi_k$ that contains $q$. 
\end{invariant}
\begin{invariant}
For each cell $\sigma\in \Xi$ with $P(\sigma)\neq \emptyset$, if $\sigma$ is a cell of $\Xi_k$, then $minCost(\sigma)=\min_{q\in P(\sigma)}(w(p)+\lambda(p))$; otherwise, $minCost(\sigma)=\min_{\sigma'\in Z(\sigma)}(minCost(\sigma')+\lambda(\sigma'))$. 
\end{invariant}

The above algorithm invariants further imply the following observation. 
\begin{observation}\label{obser:invariant}
For each cell $\sigma\in \Xi$ with $P(\sigma)\neq \emptyset$, $\min_{p\in P(\sigma)}cost(p)=minCost(\sigma)+\sum_{\sigma'\in X(\sigma)}\lambda(\sigma')$.    
\end{observation}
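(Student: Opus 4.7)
The plan is to prove the observation by induction on the level of $\sigma$ in the hierarchical cutting, working from the leaves (cells of $\Xi_k$) upward to the root. Throughout I will use the two algorithm invariants and the set-theoretic identity $X(\sigma')=\{\sigma'\}\cup X(\sigma)$ whenever $\sigma'$ is a child of $\sigma$ (every ancestor of $\sigma$ is also an ancestor of $\sigma'$, and each cell is an ancestor of itself by the stated convention).

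For the base case, let $\sigma\in \Xi_k$ with $P(\sigma)\neq \emptyset$. For every $p\in P(\sigma)$, the cell of $\Xi_k$ containing $p$ is $\sigma$ itself, so Invariant~1 gives $cost(p)=w(p)+\lambda(p)+\sum_{\sigma''\in X(\sigma)}\lambda(\sigma'')$. The tail $\sum_{\sigma''\in X(\sigma)}\lambda(\sigma'')$ does not depend on $p$, so taking the minimum over $p\in P(\sigma)$ and applying the leaf case of Invariant~2 yields $\min_{p\in P(\sigma)}cost(p)=minCost(\sigma)+\sum_{\sigma''\in X(\sigma)}\lambda(\sigma'')$, as claimed.

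For the inductive step, let $\sigma\in \Xi\setminus \Xi_k$ with $P(\sigma)\neq \emptyset$, and assume the observation holds for all children of $\sigma$. Since the children of $\sigma$ cover $\sigma$ with pairwise disjoint interiors and each point of $P$ lies in a unique cell of the next finer cutting, $P(\sigma)=\bigcup_{\sigma'\in Z(\sigma)}P(\sigma')$. Hence
\[
\min_{p\in P(\sigma)}cost(p)\;=\;\min_{\substack{\sigma'\in Z(\sigma)\\ P(\sigma')\neq\emptyset}}\min_{p\in P(\sigma')}cost(p).
\]
Applying the induction hypothesis to each nonempty child $\sigma'$ and then the identity $X(\sigma')=\{\sigma'\}\cup X(\sigma)$, the inner minimum equals $minCost(\sigma')+\lambda(\sigma')+\sum_{\sigma''\in X(\sigma)}\lambda(\sigma'')$. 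Pulling the $\sigma'$-independent tail outside the outer minimum and invoking the non-leaf case of Invariant~2, $minCost(\sigma)=\min_{\sigma'\in Z(\sigma)}(minCost(\sigma')+\lambda(\sigma'))$, we obtain $\min_{p\in P(\sigma)}cost(p)=minCost(\sigma)+\sum_{\sigma''\in X(\sigma)}\lambda(\sigma'')$, completing the induction.

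The main subtlety, rather than a genuine obstacle, is bookkeeping: one must verify $X(\sigma')=\{\sigma'\}\cup X(\sigma)$ under the ``self-ancestor'' convention, and check that children $\sigma'$ with $P(\sigma')=\emptyset$ may be safely dropped from both minima, since their $minCost(\sigma')$ is set to $\infty$ by definition and so cannot realize either minimum. Once these indexing points are settled, the argument reduces to separating the $p$-dependent part $w(p)+\lambda(p)$ from the $p$-independent sum $\sum_{\sigma''\in X(\sigma)}\lambda(\sigma'')$ in the expression of Invariant~1 and chaining the two invariants through the cutting tree.
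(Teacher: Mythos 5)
Your proof is correct and carries out in full the argument that the paper leaves implicit: the paper merely asserts that the observation follows from Algorithm Invariants~1 and~2, and your bottom-up induction on the cutting tree (base case from Invariant~1 together with the leaf branch of Invariant~2; inductive step from the non-leaf branch of Invariant~2 combined with the identity $X(\sigma')=\{\sigma'\}\cup X(\sigma)$ for a child $\sigma'$ of $\sigma$) is exactly the natural way to unpack that implication. The bookkeeping points you flag — the self-ancestor convention and dropping empty children via $minCost(\sigma')=\infty$, $\lambda(\sigma')=0$ — are handled correctly.
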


For each cell $\sigma\in \Xi$, we also maintain $\calL(\sigma)$, a list comprising all descendant cells $\sigma'$ of $\sigma$ with $\lambda(\sigma')\neq 0$ and all points $p\in P(\sigma)$ with $\lambda(p)\neq 0$. As $\calL(\sigma)$ has both cells of $Y(\sigma)$ and points of $P(\sigma)$, for convenience, we use an ``element'' to refer to either a cell or a point of $\calL(\sigma)$.
As will be seen later, whenever the algorithm sets $\lambda(\sigma)$ to a nonzero value for a cell $\sigma\in \Xi$,  $\sigma$ will be added to $\calL(\sigma')$ for every ancestor $\sigma'$ of $\sigma$ with $\sigma'\neq \sigma$. Similarly, whenever the algorithm sets $\lambda(p)$ to a nonzero value for a point $p$, then $p$ will be added to $\calL(\sigma')$ for every cell $\sigma'\in X(\sigma)$, where $\sigma$ is the cell of $\Xi_k$ containing $p$. 


In the beginning, we initialize the variables $\calL(\cdot)$, $\lambda(\cdot)$, $minCost(\cdot)$ so that the algorithm invariants hold. 
For each cell $\sigma\in \Xi$, we set $\calL(\sigma)=\emptyset$ and $\lambda(\sigma)=0$. For each point $p\in P$, we set $\lambda(p)=0$. Since $cost(p)=w(p)$ initially, Algorithm Invariant 1 holds. 

We next set $minCost(\sigma)$ for all cells of $\sigma\in \Xi$ in a bottom-up manner following the tree structure of $\Xi$. Specifically, for each cell $\sigma\in \Xi_k$, set $minCost(\sigma)=\min_{p\in P(\sigma)}w(p)$. If $P(\sigma)=\emptyset$, we set $minCost(\sigma)=\infty$. 
Then, we set $minCost(\sigma)$ for all cells of $\sigma\in \Xi_{k-1}$ with $minCost(\sigma)=\min_{\sigma'\in Z(\sigma)}(minCost(\sigma')+\lambda(\sigma'))$. We continue this process to set $minCost(\sigma)$ for cells in $\Xi_{k-2},\Xi_{k-3},\ldots,\Xi_0$. 
This establishes Algorithm Invariant 2. 

In addition, for each cell $\sigma$ in the last cutting $\Xi_k$, we construct a min-heap $\calH(\sigma)$ on all points $p$ of $P(\sigma)$ with the values $w(p)+\lambda(p)$ as ``keys''. Using the heap, if $\lambda(p)$ changes for a point $p\in P(\sigma)$, $minCost(\sigma)$ can be updated in $O(\log n)$ time. 

This finishes our preprocessing step, which takes $O(n\log n+nr)$ time in total.

\subsection{The \findmin\ operation}

Consider a half-plane $h_i$ in \findmin\ operation of the $i$-th iteration of Algorithm~\ref{algo:10}. The goal is to compute $\min_{p\in P(h_i)}cost(p)$, i.e., the minimum cost of all points of $P$ inside the half-plane $h_i$ (note that since we use $P$ to represent $P_{\alpha}$, the notation $P_{\alpha}(h_i)$ in Algorithm~\ref{algo:10} becomes $P(h_i)$). 

Recall that $\phi(h_i)$ is the list of all cells $\sigma\in \Xi$ such that $h_i\in H_{\sigma}$.
Define $\phi_1(h_i)$ to be the set of all cells of $\phi(h_i)$ that are from $\Xi_k$ and let $\phi_2(h_i)=\phi(h_i)\setminus \phi_1(h_i)$. Define $\phi_3(h_i)$ to be the set of cells $\sigma\in \Xi$ such that $\sigma$'s parent is in $\phi_2(h_i)$ and $\sigma$ is completely contained in $h_i$. 
The following observation is due to the definition of the hierarchical cutting. 

\begin{observation}\label{obser:Qunion}
$P(h_i)$ is the union of $\bigcup_{\sigma\in \phi_1(h_i)}(P(\sigma)\cap h_i)$ and $\bigcup_{\sigma\in \phi_3(h_i)}P(\sigma)$.
\end{observation}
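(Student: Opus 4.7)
The plan is to prove the two set inclusions separately; one direction is essentially immediate. For any $\sigma\in \phi_1(h_i)$ we trivially have $P(\sigma)\cap h_i\subseteq P(h_i)$, and for any $\sigma\in \phi_3(h_i)$ the cell $\sigma$ lies entirely inside $h_i$ by definition, so $P(\sigma)\subseteq P(h_i)$. Hence the right-hand side is contained in $P(h_i)$ without further work.

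For the other direction, I would fix an arbitrary $p\in P(h_i)$, let $\tau$ denote the unique cell of $\Xi_k$ containing $p$, and walk along the chain of ancestor cells of $\tau$ in the tree of $\Xi$, starting at the root. The crucial opening observation is that the root cell is the entire plane, whose interior is crossed by every line of $L$; in particular the bounding line of $h_i$ crosses it, so the root lies in $\phi(h_i)$. Thus the chain begins inside $\phi(h_i)$, and I would then split into two cases depending on whether $\tau$ itself lies in $\phi(h_i)$.

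In the first case, $\tau\in \phi(h_i)$; then since $\tau\in \Xi_k$ we have $\tau\in \phi_1(h_i)$, so $p\in P(\tau)\cap h_i$ and $p$ is captured by the first union. In the second case, let $\sigma$ be the first cell along the chain that is not in $\phi(h_i)$. By minimality the parent $\sigma'$ of $\sigma$ lies in $\phi(h_i)$, and since $\sigma'$ has a strict descendant $\sigma$ it cannot belong to $\Xi_k$, so $\sigma'\in \phi_2(h_i)$. Because $\sigma\notin \phi(h_i)$, the bounding line of $h_i$ does not cross the interior of $\sigma$, so $\sigma$ lies entirely on one side of that line; since $\sigma$ contains $p\in h_i$, we must have $\sigma\subseteq h_i$. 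Therefore $\sigma\in \phi_3(h_i)$ and $p\in P(\sigma)$, placing $p$ in the second union.

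The argument is essentially a direct case analysis along a root-to-leaf path in the cutting hierarchy, and I do not foresee any serious obstacle. The only subtle bookkeeping point is verifying that the parent $\sigma'$ of the first ``out-of-$\phi$'' cell really lands in $\phi_2(h_i)$ rather than in $\phi_1(h_i)$; this is automatic, since $\sigma'$ has a strict descendant in the tree and therefore cannot belong to the deepest level $\Xi_k$.
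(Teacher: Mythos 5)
Your argument is correct. The paper states Observation~\ref{obser:Qunion} without an explicit proof, remarking only that it ``is due to the definition of the hierarchical cutting,'' and the argument you give is precisely the intended one: the easy inclusion follows since cells in $\phi_3(h_i)$ lie inside $h_i$, and for the reverse inclusion you descend the root-to-leaf chain of cells containing a point $p\in P(h_i)$ (noting that the root lies in $\phi(h_i)$), landing either at a leaf in $\phi_1(h_i)$ or at the first cell $\sigma$ whose interior is no longer crossed by the bounding line of $h_i$, whose parent is then necessarily in $\phi_2(h_i)$ and which, containing $p\in h_i$, must lie in $\phi_3(h_i)$.
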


With Observation~\ref{obser:Qunion}, our algorithm for \findmin\ works as follows. 
Let $\xi$ be a variable, which is $\infty$ initially. At the end of the algorithm, we will have $\xi=\min_{p\in P(h_i)}cost(p)$. For each cell $\sigma\in \phi(h_i)$, we process it as follows. 

\begin{itemize}
\item 
If $\sigma\in \phi_1(h_i)$, for each point $p\in P(\sigma)$, by Algorithm Invariant 1, we have $cost(p)=w(p)+\lambda(p)+\sum_{\sigma'\in X(\sigma)}\lambda(\sigma')$. If $p\in h_i$, we first compute $cost(p)$ by visiting all cells of $X(\sigma)$, which takes $O(\log r)$ time, and then update $\xi=\min\{\xi,cost(p)\}$. 

\item 
If $\sigma\in \phi_2(h_i)$, for each child $\sigma'$ of $\sigma$ that is fully contained in $h_i$ (i.e., $\sigma\in \phi_3(h_i)$), we compute $\xi_{\sigma'}=minCost(\sigma')+\sum_{\sigma''\in X(\sigma')}\lambda(\sigma'')$ by visiting all cells of $X(\sigma')$, which takes $O(\log r)$ time. By Observation~\ref{obser:invariant}, we have $\xi_{\sigma'}=\min_{p\in P(\sigma)}cost(p)$. Then we update $\xi=\min\{\xi,\xi_{\sigma'}\}$. 
\end{itemize}

After processing all cells $\sigma\in \phi(h_i)$ as above, we return $\xi$, which is equal to $\min_{p\in P(h_i)}cost(p)$. This finishes the \findmin\ operation. The total time of the 
\findmin\ operations in the entire Algorithm~\ref{algo:10} is bounded by $O((nr+n^2/r)\log r)$.

\subsection{The \reset\ operation}
Consider the \reset\ operation in the $i$-th iteration of Algorithm~\ref{algo:10}. The goal is to reset $cost(p)=w(p)+\delta_i$ for all points $p\in P(\overline{h_i})$. To this end, we need to update the $\lambda(\cdot)$ and $minCost(\cdot)$ values for certain cells of $\Xi$ and points of $P$ so that the algorithm invariants still hold. 

Define $\phi_4(h_i)$ as the set of cells $\sigma\in \Xi$ such that $\sigma$'s parent is in $\phi_2(h_i)$ and $\sigma$ is completely outside $h_i$. Let $\overline{h_i}$ denote the complementary half-plane of $h_i$. We have the following observation, which is analogous to Observation~\ref{obser:Qunion}.

\begin{observation}\label{obser:resetunion}
$P(\overline{h_i})$ is the union of $\bigcup_{\sigma\in \phi_1(h_i)}(P(\sigma)\cap \overline{h_i})$ and $\bigcup_{\sigma\in \phi_4(h_i)}P(\sigma)$.
\end{observation}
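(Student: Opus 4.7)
The plan is to prove the set equality in two directions, following the same tree-traversal strategy that underlies Observation~\ref{obser:Qunion} but with the roles of ``inside $h_i$'' and ``outside $h_i$'' swapped. The containment $\bigcup_{\sigma\in \phi_1(h_i)}(P(\sigma)\cap \overline{h_i}) \cup \bigcup_{\sigma\in \phi_4(h_i)}P(\sigma) \subseteq P(\overline{h_i})$ is immediate from the definitions: points in the first union are explicitly intersected with $\overline{h_i}$, and every cell $\sigma \in \phi_4(h_i)$ lies entirely in $\overline{h_i}$ by the definition of $\phi_4(h_i)$, so $P(\sigma)\subseteq P(\overline{h_i})$.

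For the reverse inclusion, I would fix an arbitrary point $p \in P(\overline{h_i})$ and let $\sigma_k$ be the unique cell of the finest cutting $\Xi_k$ that contains $p$. If the bounding line of $h_i$ crosses $\sigma_k$, then $\sigma_k \in \phi_1(h_i)$ and $p \in P(\sigma_k) \cap \overline{h_i}$, so $p$ lies in the first union. Otherwise, $\sigma_k$ is disjoint from the bounding line of $h_i$ and therefore lies entirely inside or entirely outside $h_i$; since $p \in \sigma_k \cap \overline{h_i}$, the cell $\sigma_k$ must be entirely outside $h_i$. I would then walk up the ancestor chain of $\sigma_k$ in $\Xi$ and select the highest ancestor $\sigma^*$ (possibly $\sigma_k$ itself) whose interior is still not crossed by the bounding line of $h_i$. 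Such a $\sigma^*$ exists because the root cell of $\Xi_0$ is the entire plane, which is crossed by that line, so the ``uncrossed'' property must fail at some level of the hierarchy. By the maximality of $\sigma^*$, its parent is crossed by the bounding line of $h_i$ and has at least one child, so the parent cannot lie in $\Xi_k$ and must therefore belong to $\phi_2(h_i)$. The containment $\sigma_k \subseteq \sigma^*$ combined with $\sigma_k \subseteq \overline{h_i}$ rules out $\sigma^*$ being entirely inside $h_i$, so $\sigma^*$ is entirely outside $h_i$. Hence $\sigma^* \in \phi_4(h_i)$ and $p \in P(\sigma^*)$, placing $p$ in the second union.

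The only real obstacle is verifying that the ``highest uncrossed ancestor'' construction behaves symmetrically to the analogous one used for Observation~\ref{obser:Qunion}; once the parent--child containment property of the hierarchical cutting and the fact that the cells of $\Xi_k$ are the leaves of the hierarchy are in hand, the remainder is a clean case analysis requiring no new technical machinery beyond what was already used for Observation~\ref{obser:Qunion}.
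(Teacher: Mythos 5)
Your proof is correct and captures exactly the argument the paper has in mind; the paper itself does not spell out a proof of either Observation~\ref{obser:Qunion} or Observation~\ref{obser:resetunion}, merely remarking that they follow from the definition of the hierarchical cutting, and your ``highest uncrossed ancestor'' walk is the standard way to make that precise. One tiny stylistic point: rather than phrasing it as walking up from $\sigma_k$ and picking the highest uncrossed ancestor, it is slightly cleaner to walk down from the root along the ancestor chain of $\sigma_k$ and stop at the first cell whose interior is not crossed by the bounding line of $h_i$; since the root is crossed and $\sigma_k$ is not, and ``crossed'' is monotone along the chain (a cell contained in an uncrossed cell is itself uncrossed), this first uncrossed cell $\sigma^*$ is well defined, its parent is crossed and non-leaf, hence in $\phi_2(h_i)$, and $\sigma^*\subseteq\overline{h_i}$ because it contains $\sigma_k\subseteq\overline{h_i}$. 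This is the same content as your argument, just stated in the direction that makes the existence and the parent condition immediate.
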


Our algorithm for \reset\ works as follows. Consider a cell $\sigma\in \phi(h_i)$. Depending on whether $\sigma$ is from $\phi_1(h_i)$ or $\phi_2(h_i)$, we process it in different ways. 

If $\sigma$ is from $\phi_1(h_i)$, we process $\sigma$ as follows. For each point $p\in P(\sigma)$, if $p\in \overline{h_j}$, then we are supposed to reset $cost(p)$ to $w(p)+\delta_i$. To achieve the effect, we do the following. First, we set $\lambda(p)=\delta_i-\sum_{\sigma'\in X(\sigma)}\lambda(\sigma')$. 
After that, we have $w(p)+\lambda(p)+\sum_{\sigma'\in X(\sigma)}\lambda(\sigma')=w(p)+\delta_i$, which establishes the first algorithm invariant for $p$. For the second algorithm invariant, we first update $minCost(\sigma)$ using the heap $\calH(\sigma)$, i.e., by updating the key of $p$ to the new value $w(p)+\lambda(p)$. Next, we update $minCost(\sigma')$ for all ancestors $\sigma'$ of $\sigma$ in a bottom-up manner using the formula $minCost(\sigma')=\min_{\sigma''\in Z(\sigma')}(minCost(\sigma'')+\lambda(\sigma''))$. Since each cell has $O(1)$ children, updating all ancestors of $\sigma$ takes $O(\log r)$ time. This establishes the second algorithm invariant. Finally, since $\lambda(p)$ has just been changed, if $\lambda(p)\neq 0$, then we add $p$ to the list $\calL(\sigma')$ for all cells $\sigma'\in X(\sigma)$. 
This finishes the processing of $p$, which takes $O(\log r+\log n)$ time. 
Processing all points of $p\in P(\sigma)$ as above takes $O(|P(\sigma)|\cdot (\log r+\log n))$ time. 

If $\sigma$ is from $\phi_2(h_i)$, then we process $\sigma$ as follows. For each child $\sigma'$ of $\sigma$, if $\sigma'$ is completely outside $h_i$, then we process $\sigma'$ as follows. We are supposed to reset $cost(p)$ to $w(p)+\delta_i$ for all points $p\in P(\sigma')$. In other words, the first algorithm invariant does not hold any more and we need to update our data structure to restore it. Note that the second algorithm invariant still holds. 
To achieve the effect, 
for each element $e$ in $\calL(\sigma')$ (recall that $e$ is either a cell of $Y(\sigma')$ or a point of $P(\sigma')$), we process $e$ as follows. First, we remove $e$ from $\calL(\sigma')$. Then we reset $\lambda(e)=0$. If $e$ is a point of $P(\sigma')$, then let $\sigma_e$ be the cell of $\Xi_k$ that contains $e$; otherwise, $e$ is a cell of $Y(\sigma')$ and let $\sigma_e$ be the parent of $e$. Since $\lambda(e)$ is changed, we update $minCost(\sigma'')$ for all cells $\sigma''\in X(\sigma_e)$ in the same way as above in the first case for processing $\phi_1(h_i)$. 
This finishes processing $e$, after which the second algorithm invariant still holds. After all elements of $\calL(\sigma')$ are processed as above, $\calL(\sigma')$ becomes $\emptyset$ and we reset $\lambda(\sigma')=\delta_i-\sum_{\sigma''\in X(\sigma')\setminus\{\sigma'\}}\lambda(\sigma'')$. Since $\lambda(\sigma')$ has been changed, we update $minCost(\sigma'')$ for all cells $\sigma''\in X(\sigma)$ in the same way as before, after which the second algorithm invariant still holds. In addition, if $\lambda(\sigma')\neq 0$, then we add $\sigma'$ to the list $\calL(\sigma'')$ for all cells $\sigma''\in X(\sigma)$. 
This finishes processing $\sigma'$, which takes $O(|\calL(\sigma')|\cdot (\log r+\log n))$ time. The first algorithm invariant is established for all points $p\in P(\sigma')$.     

This finishes the \reset\ operation. The total time of the \reset\ operations in the entire Algorithm~\ref{algo:10} is bounded by $O((nr+n^2/r)\cdot \log r\cdot (\log r+\log n))$. 

\subsection{Summary}
\label{sec:summary}

According to the above discussion, the total time of the overall algorithm is $O(n\log n+n\log r+(nr+n^2/r)\cdot \log r\cdot (\log n+\log r))$. Recall that $1\leq r\leq n$. Setting $r=\sqrt{n}$ gives the upper bound $O(n^{3/2}\log^2 n)$ for the overall time complexity of Algorithm~\ref{algo:10}.

\bibliographystyle{plainurl}

\end{document}